\documentclass[journal]{IEEEtran}
\usepackage{cite}

\usepackage{color,soul}
\usepackage{braket}
\usepackage{array}
\usepackage{adjustbox}
\usepackage{subcaption}
\usepackage{multirow}
\usepackage{booktabs}
\usepackage{amssymb,amsthm}
\usepackage{graphicx}

\newtheorem{theorem}{Theorem}

\newtheorem{corollary}{Corollary}

\renewenvironment{IEEEbiography}[1]
{\IEEEbiographynophoto{#1}}
{\endIEEEbiographynophoto}

\ifCLASSINFOpdf
\else
\fi
\usepackage{amsmath}
\usepackage{mathtools}
\usepackage{amssymb}
\usepackage{dsfont}
\usepackage[font=footnotesize]{caption}
\usepackage{algorithmic}
	\usepackage{url}
	
	\usepackage{mathtools}
	\usepackage{makecell}

\begin{document}
		%
		\title{Information Density as a Quantitative Measure for AI-enabled Virtual Sensing: Feasibility and Limits}
		
		%
		%
		%
		
		\author{Hrishikesh Dutta,~\IEEEmembership{Member,~IEEE,}
			Roberto Minerva,~\IEEEmembership{Senior Member,~IEEE,}
			Reza Farahbakhsh,~\IEEEmembership{Member,~IEEE,}
			and~Noel Crespi,~\IEEEmembership{Senior Member,~IEEE}
			\thanks{The authors are with the Data Intelligence and Communication Engineering Lab, Telecom SudParis, Institut Polytechnique de Paris, France (email: hrishikesh.dutta@telecom-sudparis.eu; roberto.minerva@telecom-sudparis.eu; reza.farahbakhsh@it-sudparis.eu, noel.crespi@telecom-sudparis.eu)}
		}
		
		%
		%

	\maketitle
	
	\begin{abstract}
		Modern IoT and sensor networks generate vast amounts of data, posing significant challenges for storage, transmission, and real-time processing. Traditional approaches, such as compressive sensing and machine learning-based compression, often suffer from computational inefficiencies and irreversible data loss. This paper introduces Information Density as a quantitative metric to support sensor deployment and enable AI-driven virtual sensing. We propose a framework that leverages spatial, temporal and inter-modal correlations among sensor signals to perform sensing tasks even in the absence of physical sensors. Two complementary measures: (i) Phase in Eigen Space and (ii) Mutual Information, are developed to quantify and assess information density, enabling the selection of optimal sensor configurations across both intra-modality and cross-modality scenarios. Validated using real-world data from Madrid's smart city infrastructure, this framework demonstrates the feasibility of replacing physical sensors with virtual ones under bounded error conditions (e.g., achieving $<3.21\%$ mean error with a single sensor). The results highlight the potential for scalable and energy-efficient sensing systems in smart environments.
	\end{abstract}
	
	\begin{IEEEkeywords}
		Information Density, Virtual Sensing, Sensor Fusion, IoT Networks, AI-driven Sensing, Smart Cities, Mutual Information, Eigen Space Analysis.
	\end{IEEEkeywords}

	%
	\IEEEpeerreviewmaketitle

	\section{Introduction}
	\label{intro}
	
	Modern day technical advancements, such as, metaverse and smart environments, are fueled by massive data generation from large scale networks of sensors and IoT devices. Such massive IoT infrastructures, powering systems like autonomous transportation, smart cities, and industrial automation, are generating data at rates that exceed the capacity of conventional data infrastructures. As the number of connected devices grows, the volume of data, the speed at which it is generated, and the diversity of data types pose serious challenges for storage, transmission, and real-time processing. Without scalable management strategies, these systems face increased latency, higher energy consumption, and degraded performance. Let us consider Madrid, for instance, a front-runner in the smart city landscape. The city has deployed a vast and diverse array of IoT devices across its urban environment, enabling multi-modal sensing in areas such as traffic, air quality, weather conditions, noise levels, energy usage, and waste management. Its traffic monitoring system alone integrates over 7,000 sensors distributed across more than 4,000 locations, generating more than 145 million data points \cite{ding2024deep}. These figures represent just one slice of the city’s overall sensor infrastructure. With numerous other sensing networks operating simultaneously, the volume and complexity of the collected data become significant. In such a densely instrumented environment, the need for scalable data management and efficient system coordination becomes critical.
	
	There have been several research efforts to efficiently manage large-scale sensor networks and the massive data they generate. Traditional methods like compressive sensing \cite{abbasian2020survey, ahmad2019lossless, arumugam2015ee, cao2020multi, chen2020new, chen2019hierarchical, chen2019layered, gao2022application, halder2019limca, jarwan2019data, ketshabetswe2021data, roy2022limited, saidani2019new} aggregate data from nearby sensors using spatio-temporal correlations and lossless compression, but often struggle with adaptability and impose heavy computation on sensor nodes. To address this, AI-based approaches have been adopted, including dimensionality reduction techniques like PCA \cite{chowdhury2020adaptive, diwakaran2019cluster, yang2023guidelines}, SVD \cite{alam2021error, he2019multi}, and LDA \cite{fabiyi2021folded}, along with models such as autoencoders \cite{chen2020wsn, kuester2023convolutional} and reinforcement learning \cite{yun2021q}. More recent developments explore semantic communication \cite{feng2023data, wang2023semantic} and synthetic or general-purpose sensing \cite{laput2017synthetic, laput2019sensing, zhang2018vibrosight}, which use a limited number of versatile sensors to infer multiple phenomena. However, many of these methods place significant processing demands on resource-constrained devices. As a solution, the Data-driven Modality Fusion (DMF) framework \cite{dutta2025data} uses machine learning to fuse data from multiple sensors, enabling virtual sensing and reducing physical sensor count. Additional works \cite{ding2024deep, farid2023data, costa2023sensor, song2023multi, li2023submodularity} focus on minimizing sensor deployments while maintaining acceptable sensing performance.

	All these frameworks - relying on lossless compression, correlation measures and machine learning techniques - serve as noteworthy milestones towards an efficient management of large scale sensor network infrastructures and handling the massive volume of data generated. Particularly, the approaches developed in \cite{dutta2025data} empirically demonstrate their capabilities to retain meaningful feature information for data reconstruction and inference, without the complete reliance on physical sensors. Nevertheless, one common problem with these developments is the lack of a quantitative measure or a criterion that can be used to explore the optimal number and distribution of physical sensors in one or more regions, for a given sensing performance. This is not as trivial as it appears in the first look, since, the elimination of a subset of certain highly correlated sensing modalities can be useful from data management perspective, while still allowing us to recover them from the other modalities (as shown in our prior work \cite{dutta2025data, dutta2025cross}); on the other hand, these highly correlated sensor values can be useful for extracting spatial information  (as shown in \cite{ding2024deep}) for a network deployed in a large city like Madrid. Therefore, in this work, we first extend the concept of Data-driven Modality Fusion (DMF) proposed in \cite{dutta2025data} to incorporate spatial sensing correlation into account and demonstrate how DMF allows us to perform sensing in a region with virtual, AI-enabled sensors, in the absence of real, physical sensors. Building on that, this paper introduces Information Density as a novel, quantitative metric to optimize sensor configurations. This metric enables informed decisions about which physical sensors are essential and which can be virtually inferred, thus reducing cost, energy consumption, and data redundancy. The added value lies in its application to both intra-modality (e.g., temperature-to-temperature inference) and cross-modality (e.g., air quality inferred from noise or traffic data) virtual sensing. Through the lens of Madrid’s sensor infrastructure, which is the use-case considered for this work, we demonstrate that accurate sensing can be achieved with significantly fewer physical devices. 
	

	This paper has the following specific scopes and contributions.
	\begin{itemize}
		\item This paper extends the concept of DMF to incorporate spatial sensing correlation, enabling fusion-based sensing in the absence of physical sensors. We demonstrate how virtual sensors, trained using fused data from multiple physical modalities, can replace or supplement real sensors for large-scale sensing applications.
		\item This paper introduces the concept of information density as a quantitative metric to estimate the optimal number, type, and spatial distribution of sensors required for effective virtual sensing. It identifies and analyzes the distinct approaches needed for virtual sensing across sensors of the same modality (\textit{intra-modality virtual sensing}), and of different modalities (\textit{cross-modality inference}), based on their spatial and semantic correlations.
		\item The proposed claims are validated in a real-world context by referencing smart city environments of Madrid, highlighting the relevance and scalability of the proposed methods.
	\end{itemize}
	
	The remainder of the paper is organized as follows. First, we present a literature review of the existing approaches of data and network management in the context of large scale IoT/sensor networks. Next, we formally present the problem of Information Density and its relevance in the context of network and data management problem, in section \ref{prob_def}, followed by the proposed quantitative measures for defining it in section \ref{measures}. Then, we present a use case of smart city IoT network deployed in Madrid in section \ref{use_case} to demonstrate the proposed concept of virtual sensing and information density, where we present a detailed performance evaluation of the framework and the models. Finally, the conclusions of the study are reported in section \ref{conc}.

	
	\section{Related Work}
	\label{secii}
	
	There have been several research works with focus on improving data management in sensor networks, with an emphasis on reducing communication overhead and extending the operational lifespan of energy-constrained nodes. Broadly, two major strategies have emerged: local compression and distributed compression. Distributed schemes consolidate data from multiple sensor nodes to eliminate redundancy before transmission, typically relying on spatial correlations \cite{abbasian2020survey, arumugam2015ee, halder2019limca, chen2019layered}. However, this assumption does not hold in all deployment scenarios, limiting the applicability of such methods. In contrast, local compression techniques focus on temporal correlations within individual sensors to minimize transmitted data \cite{cao2020multi, jarwan2019data}, with hybrid strategies combining both temporal and spatial cues also explored \cite{chen2019hierarchical}.
	
	Compression techniques can be further classified into lossless and lossy methods. Lossless schemes, such as those based on entropy encoding and statistical modeling \cite{saidani2019new, gao2022application, ahmad2019lossless}, ensure data fidelity but may not yield significant reduction ratios. Lossy techniques, including those based on Bayesian prediction and segmentation algorithms \cite{chen2020new, roy2022limited}, achieve greater compression at the cost of irreversible data degradation and increased computational demand.
	
	To address limitations of traditional compression, recent studies have explored machine learning-based approaches for data reduction in IoT and sensor networks. Feature extraction and dimensionality reduction techniques, such as Principal Component Analysis (PCA) \cite{chowdhury2020adaptive, yang2023guidelines, diwakaran2019cluster}, Singular Value Decomposition (SVD) \cite{alam2021error, he2019multi}, and Linear Discriminant Analysis (LDA) \cite{fabiyi2021folded}, have been adopted to project high-dimensional sensor data into a compact latent space. Despite their effectiveness, these methods require access to the full dataset, which is often infeasible in memory-limited sensor nodes. Furthermore, neural network-based models like autoencoders \cite{kuester2023convolutional, chen2020wsn} and reinforcement learning strategies \cite{yun2021q} offer adaptive alternatives but introduce higher computational requirements that many embedded devices cannot support. Semantic communication models \cite{feng2023data, wang2023semantic} have also emerged, focusing on optimizing transmission by encoding data in terms of its relevance to downstream tasks rather than raw accuracy.
	
	Another notable direction involves the paradigm of synthetic or general-purpose sensing, where a small number of high-capability sensors capture ambient information used to infer multiple phenomena \cite{laput2017synthetic, laput2019sensing, zhang2018vibrosight}. These systems aim to reduce hardware redundancy by learning indirect mappings between sensor signals and real-world events, thereby lowering deployment complexity and potentially improving user privacy. However, these systems often rely on intensive on-device computation and may fall short in dynamic environments such as urban-scale IoT deployments.
	
	More recently, emphasis has shifted from data compression alone to the broader goal of optimizing sensor infrastructure itself. Techniques such as Data-driven Modality Fusion (DMF) \cite{dutta2025data} aim to reduce physical sensor deployments by combining information from multiple modalities using machine learning, creating virtual sensors capable of accurate inference from fused data streams. This approach addresses scalability and energy-efficiency by offloading computation and reducing redundant sensor usage. In parallel, other works \cite{ding2024deep, costa2023sensor, song2023multi, li2023submodularity, farid2023data} explore optimal sensor placement strategies to minimize coverage gaps while preserving sensing accuracy. These methods demonstrate promising results in urban-scale scenarios by maintaining performance with fewer deployed sensors.
	
	There are recent related approaches for virtual sensing as well. The authors in \cite{zhao2025graph} introduce a novel heterogeneous temporal Graph Neural Network for virtual sensing that performs sensor fusion across diverse modalities in complex systems. A virtual sensing-enabled digital twin framework is proposed in \cite{hossain2025virtual} for real-time monitoring of nuclear systems leveraging deep neural operators. Similarly an LSTM-based virtual sensing technique for industrial process control loop is designed in \cite{gonzalez2025assessment}. The paper \cite{hu2025online} proposes a virtual sensing technique based on kriging interpolation, which gives an unbiased sound pressure estimator with minimum variance, assuming the spatial correlation model. The proposed method only depends on the distances between physical and virtual microphones and makes no assumption on the primary noise field, implying that the method naturally tracks the variations of the noise sources.
	
	While these developments mark significant progress, they often overlook systematic frameworks for determining the optimal number and type of sensors needed in a given region. The lack of formal metrics or optimization strategies to quantify sensing adequacy, especially when substituting physical sensors with virtual ones, presents a notable gap. This paper builds upon the DMF framework to introduce the concept of information density, a formalized criterion for assessing the sufficiency and spatial distribution of sensing resources. This approach offers a pathway to adaptive, AI-enhanced sensing infrastructures that can scale effectively with smart city requirements.


	\section{Information Density: Problem Definition}
	\label{prob_def}
	
	Let \( \mathcal{E} \) denote a physical environment represented by the spatial coordinates \( (r, \theta) \in \mathbb{R}^2 \). Within \( \mathcal{E} \), a set of physical sensors \( \mathcal{S} = \{s_{(r, \theta)}\} \) are deployed, each associated with a sensing modality \( m_i \in \mathcal{M} \). These sensors generate time-indexed data streams \( S^j_{(r,\theta)}(t) \), where \( j = 1, \ldots, M_{(r,\theta)} , \ldots |\mathcal{M}|\) denotes the number of sensing modalities available at location \( (r,\theta) \). In addition, external data sources \( \mathcal{D} \) (e.g., social data, inferred metadata, cloud services) may provide auxiliary context about phenomena in the environment.
	
	
	We aim to infer or describe a set of real-world phenomena \( \Phi = \{\phi_1, \phi_2, \ldots\} \) occurring within \( \mathcal{E} \), using sensor and auxiliary data. For a specific phenomenon \( \phi \in \Phi \), we define an \textit{information inference function}:
	
	\begin{equation}
		I_k(\phi) = f_k(\mathcal{S}^\prime, \mathcal{D}^\prime), \quad \mathcal{S}^\prime \subseteq \mathcal{S}, \quad \mathcal{D}^\prime \subseteq \mathcal{D}, \quad 1 \leq k \leq P
		\label{eq:inf_func}
	\end{equation}
	
	Each function \( f_k \) attempts to infer or model some aspect of \( \phi \) using selected subsets of sensor and auxiliary data. An environment is considered \textit{information-dense} for phenomenon \( \phi \) if such a set of functions \( \{I_k\} \) can accurately describe it, using minimal number of sensors and data.
	
	To quantify \textit{information density}, we first pose the following optimization problem. Let \( \mathcal{S}_{\text{opt}} \subseteq \mathcal{S} \times  \mathcal{M}\) denote the minimal subset of sensors that, through fusion, allow accurate inference of all \( \phi \in \Phi \). Our goal is to minimize the total number of deployed modalities:
	
	\begin{equation}
		{\mathcal{S}_{\text{opt}}} = \min \sum_{(r, \theta)} (s_{(r, \theta)}\times M_{(r, \theta)})
		\label{eq:minimize_modalities}
	\end{equation}
	
	Subject to the condition:
	
	\begin{equation}
		\Delta \left( I_k^{(i)} = f_k(\mathcal{S}_i^\prime, \mathcal{D}^\prime),\; I_k^{(j)} = f_k(\mathcal{S}_j^\prime, \mathcal{D}^\prime) \right) < \epsilon,\quad \forall i \neq j
		\label{eq:divergence_constraint}
	\end{equation}
	
	Where:
	\begin{itemize}
		\item \( \mathcal{S}_i^\prime \subseteq \mathcal{S} \) is the set of sensors available at location \( (r_i, \theta_i) \),
		\item \( \Delta(\cdot,\cdot) \) is a divergence or distance measure between inference outputs at different spatial locations,
		\item \( \epsilon \) is the maximum tolerable divergence threshold for acceptable performance.
	\end{itemize}
	
	This formulation ensures that different locations across \( \mathcal{E} \) yield consistent, accurate models of \( \phi \), despite relying on reduced sensor configurations. Note that this formulation does not assert global optimality, but instead defines a consistency criterion under which reduced sensor configurations can be considered informationally sufficient for a given phenomenon.
	
	By solving this problem, we identify:
	\begin{itemize}
		\item The minimum number and optimal spatial distribution of sensors needed,
		\item The conditions under which physical sensors can be substituted by virtual or fused sensors without significant loss of information.
	\end{itemize}
	
	This information density framework serves as a foundation for designing scalable, cost-efficient, AI-driven sensing infrastructures. It is applicable both to \textit{intra-modality virtual sensing} (e.g., reconstructing temperature from fewer temperature sensors) and \textit{cross-modality inference} (e.g., estimating air quality using vibration and sound data), each requiring tailored fusion strategies.

	\begin{figure*}[t]
		\centering
		\includegraphics[width=0.8\textwidth]{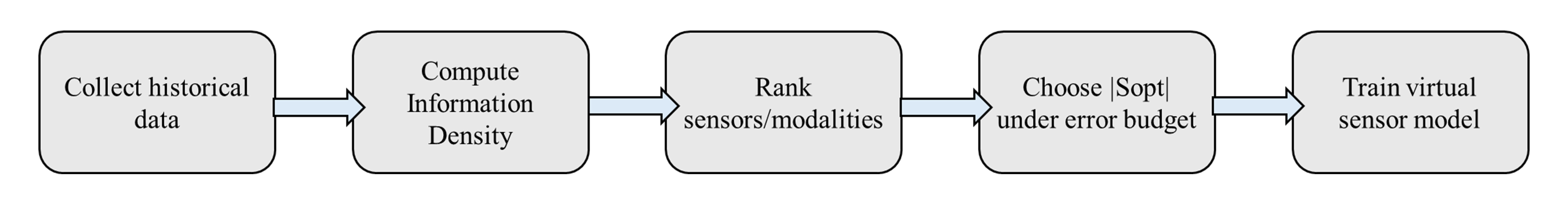}
		\caption{Practitioner workflow for operating Information Density}
		\label{fig:workflow}
	\end{figure*}

	\section{Measures for Defining Information Density}
	\label{measures}
	
	In this section, we propose two measures for defining Information Density. Before providing the formal definition, we introduce the two key concepts of sensing virtualization, viz., \textit{intra-modality virtual sensing} and \textit{cross-modality inference}, and point-out how the definition of Information Density needs to be adapted differently for these two contexts.
	
	\textbf{\textit{Intra-modality virtual sensing (ImVS)}} refers to the scenario of sensing inference executed on the sensors that sample data from the same modality. More precisely, when there are multiple sensors measuring a phenomenon (represented by a specific sensing modality) in a given environment, there exists an optimal set of sensing placements that can be best used to derive and model the phenomenon, with a tolerable error margin. This can be achieved using an extended version of Data-driven Modality Fusion \cite{dutta2025data}, as we present this later in section \ref{use_case}. An example of this scenario would be to find out a spatio-temporal temperature profiling of an environment by positioning temperature sensors at different geo-locations. In order to find the optimal set of sensors for this case, it is desirable to find out the ones that have high divergence measures in their sensed data. In other words, the optimal set of sensors should be selected such that the measured values or data possess low similarity in the feature space or the phenomenon that they are sensing at a given time instant. The idea here would be to estimate or infer the sensing measurements at all other instances, based on the high similarity measures that exist with readings from other sensors. This is motivated by the fact if the data  follows similar distribution or there exists strong relationship that can be modeled or function-approximated, then the placements of physical sensors at multiple locations can be got rid of, thus allowing redundancy reduction. 
	
	\textbf{\textit{Cross-modality Inference (CmI)}}, on the other hand, is applicable to scenarios where the modeling of an environment requires information about several phenomena. Such information is captured by sensing across multiple different modalities, usually at a specific geo-location. An example of such scenario can be referred to smart agriculture for monitoring different meteorological information, such as, temperature, humidity, radiation, along with soil moisture levels and other conditions related to plant growth, using multi-modal sensing capabilities. The idea here is that if there exists high similarity (or low divergence) in the data distribution along the features of interest, then one modality can be estimated from the other, as accomplished successfully in \cite{dutta2025data}, drastically removing the need for multitude of sensors along different modalities.
	
	In order to find out the best set of sensors and/or sensing modalities that can effectively represent and model the phenomenon of interest, we need to rely on some statistical similarity or divergence measures. In the following part of this section, we define two such measures that can quantify information density and allow us to find the optimal set of sensors or sensing modalities.
	
	\textbf{\textit{Phase in Eigen Space:}} For a given dataset, eigen vectors represent the direction along which it possesses the maximum variance. Hence, the idea here is that if an Eigen space representation of the data collected across different sensors or sensing modalities is created, then the phase angles among their largest Eigen vectors would represent the divergence in the data. In other words, larger the phase difference between the vectors (more orthogonal), less likely is that the sensor measurements have similarity and consequently lower is the information density. Formally, this can be computed as follows.
	
	For a given sensor $s_m$, we compute the Eigen vector $\mathbf{x}_m^*$ with the largest Eigen value $\lambda_m^*=max_p{\lambda_m(p)}$. This Eigen vector corresponding the largest Eigen value $\lambda_m^*$ can be derived using Eqn. \ref{eigen_eqn} 
	
	\begin{equation}
		\mathbf{x}_m^*=argmax_{x_m}{(\mathbf{x}_m^TC_m\mathbf{x}_m)}
		\label{eigen_eqn}
	\end{equation}
	
	Here $C_m$ is the covariance matrix of the normalized dataset ($\mathcal{D}_m$), related to the $m^{th}$ sensor, with $n_{\mathcal{D}_m}$ samples, with eigenvalues $\lambda_1 \geq \lambda_2 \geq \cdots \geq \lambda_d$ and corresponding eigenvectors $v_1,\ldots,v_d$, which can be computed as follows.

	\begin{equation}
		C_m = \frac{1}{n_{\mathcal{D}_m}-1}(\mathcal{D}_m-\mathbb{E}[\mathcal{D}_m])^T(\mathcal{D}_m-\mathbb{E}[\mathcal{D}_m])^T
		\label{covariance_eqn}
	\end{equation}

	Next, considering the entire set of sensors $ \tilde{\mathcal{S}} = \{ s_1, s_2, \ldots, s_N\} $, where $s_k \in \mathcal{S} \times  \mathcal{M}$, we compute the angles between the principal Eigen vectors. This gives us the set $\Omega = \{\omega_{(1,2)}, \omega_{(1,3)}, \ldots, \omega_{(N,N-1)}\}$, where $\omega_{(m,n)}$ represent the angle between sensors $m, n$ computed as:
	
	\begin{equation}
		\omega_{(m,n)}={cos^{-1}(\frac{\mathbf{x}_m^*.\mathbf{x}_n^*}{||\mathbf{x}_m^*||.||\mathbf{x}_n^*||})}
		\label{angle_comp}
	\end{equation}
	
	This space $\Omega$ comprising of the phase among the principal Eigen vectors in the sensor readings, physically represent the 2D scalar field of the Information Density for a given environment \( \mathcal{E} \) and a specific phenomenon \( \Phi \). The more orthogonal the vectors are (that is, higher the values of \(\omega_{(m,n)}\)), the less is the similarity among the collected data, and less is the information density.
	
	Now, one way to find the optimal set of sensing modalities could be based on ranks assigned to them depending on the average angle they form with all other sensors. In that case, the sensors with the lowest average angles are deemed to exhibit the most representative or directional patterns. Determining the set $\mathcal{S}_{\text{opt}}^k$ with top $k$ sensors is given by Eqn. \ref{phase_min_eqn}.

	\begin{equation}
		\mathcal{S}_{\text{opt}}^k= argsort(\frac{1}{|\tilde{\mathcal{S}}|-1} \sum_{n\in \tilde{\mathcal{S}}, n \neq m}{\omega_{(m,n)}})_{[:k]}
		%
		\label{phase_min_eqn}
	\end{equation}


	\begin{figure*}[t]
		\centering
		\begin{subfigure}[t]{0.48\textwidth}
			\centering
			\includegraphics[width=\linewidth]{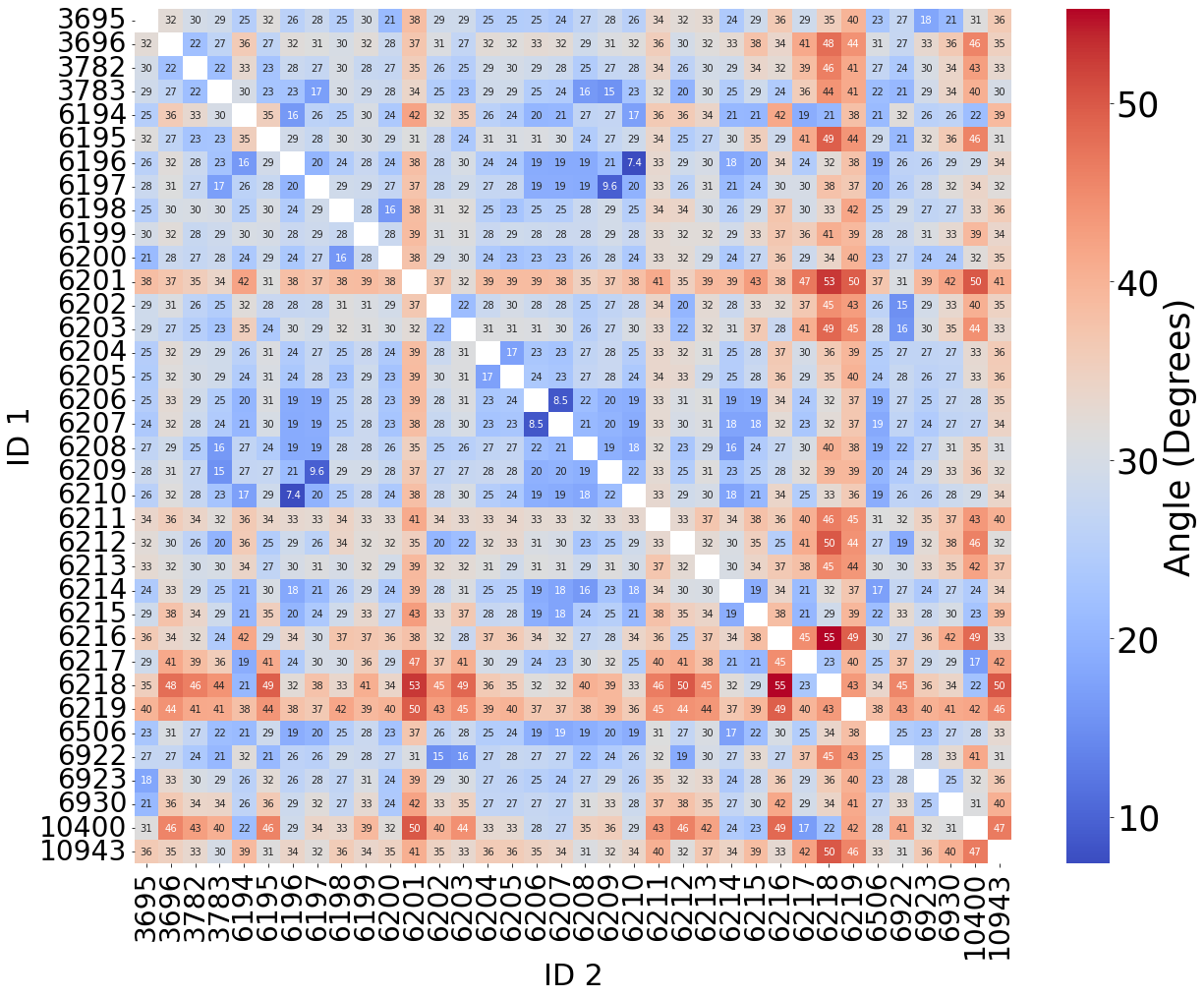}
			\caption{Eigen Space phase}
			\label{fig:phase_2d}
		\end{subfigure}
		\hfill
		\begin{subfigure}[t]{0.48\textwidth}
			\centering
			\includegraphics[width=\linewidth]{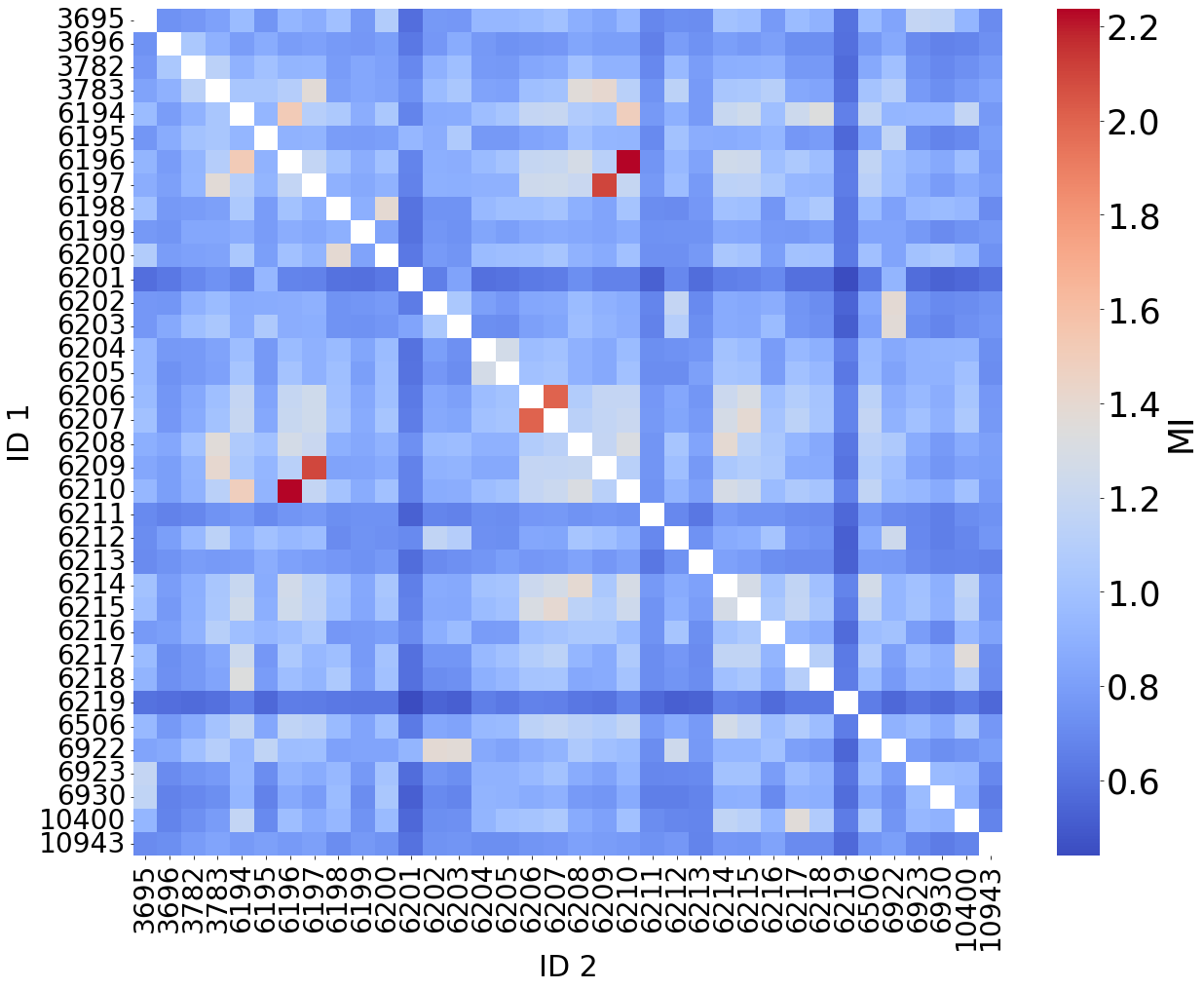}
			\caption{Mutual Information}
			\label{fig:mi_2d}
		\end{subfigure}
		\caption{(a) Spatial distribution of information density in Madrid’s District 19 traffic network using Phase in Eigen Space. Colors indicate the angular divergence between principal components of sensor data; smaller angles (blue) suggest high redundancy, while larger angles (yellow/red) indicate diverse, complementary information. (b) Spatial distribution of information density using Mutual Information. Higher mutual information values (red) reflect strong statistical similarity between sensor readings, implying potential redundancy. This complements Figure 1a by offering an inverse yet consistent view.}
		\label{fig:comparison_2d}
	\end{figure*}

	Now we are in a position to state the following theorem.
	
	\begin{theorem}
		The principal eigenvector angle $\omega_{(m,n)}$ between sensors $m$ and $n$ quantifies the alignment of the dominant variance directions of the two sensor signals.
	\end{theorem}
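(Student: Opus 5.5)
The plan is to turn the informal statement into a precise claim and then prove that. The key ingredient is the variational (Rayleigh quotient) characterization in Eqn.~\ref{eigen_eqn}. For each sensor $m$, the covariance $C_m$ of Eqn.~\ref{covariance_eqn} is symmetric positive semidefinite, so the spectral theorem gives an orthonormal eigenbasis $v_1,\ldots,v_d$ with real eigenvalues $\lambda_1\ge\cdots\ge\lambda_d\ge 0$. For any unit vector $u$, $u^T C_m u$ equals the sample variance of the centred data $\mathcal{D}_m$ projected onto $u$. Expanding $u=\sum_i \alpha_i v_i$ gives $u^TC_mu=\sum_i \lambda_i\alpha_i^2\le\lambda_1$, with equality iff $u\in\mathrm{span}\{v_i:\lambda_i=\lambda_1\}$. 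Hence $\mathbf{x}_m^*/\|\mathbf{x}_m^*\|$ is exactly the direction of maximal variance of sensor $m$, which makes ``dominant variance direction'' a rigorous notion. Here I will assume $\lambda_1>\lambda_2$, so that this direction is unique up to sign; otherwise the angle is not well defined.

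Second, since $\mathbf{x}_m^*$ and $\mathbf{x}_n^*$ are only determined up to sign (and scale), $\omega_{(m,n)}$ in Eqn.~\ref{angle_comp} should be read modulo $\theta\mapsto\pi-\theta$, i.e., through $|\cos\omega_{(m,n)}|$. This quantity is invariant under rescaling and sign flips of either eigenvector, so it depends only on the two one-dimensional dominant subspaces. It is the principal angle between those lines, and it defines a metric on projective space. This justifies calling it a measure of alignment: $\omega=0$ means identical dominant directions, and $\omega=\pi/2$ means orthogonal ones.

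Third, I will connect the angle to variance quantitatively. I will show that
\[
(\mathbf{x}_n^*)^T C_m \mathbf{x}_n^* \ge \lambda_1^{(m)}\cos^2\omega_{(m,n)},
\]
and symmetrically for $n$. This means the fraction of $m$'s maximal variance captured along $n$'s dominant direction is controlled by $\cos^2\omega_{(m,n)}$. In the rank-one-dominant regime ($\lambda_2^{(m)}\ll\lambda_1^{(m)}$), the companion upper bound $\lambda_1^{(m)}\cos^2\omega+\lambda_2^{(m)}\sin^2\omega$ shows this is essentially an equality. A Davis--Kahan type remark would add stability: perturbing $C_m$ by $E$ moves $\mathbf{x}_m^*$ by an angle at most $\|E\|/(\lambda_1-\lambda_2)$, so $\omega$ is a robust statistic.

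I expect the main obstacle to be the well-posedness caveats rather than any computation. These are the sign ambiguity, degenerate top eigenvalues, and the requirement that both sensors' data live in the same feature space $\mathbb{R}^d$ for the inner product to make sense. I will state these as explicit hypotheses (common normalized feature space and a spectral gap) and phrase the theorem's conclusion as the two-sided variance bound above.
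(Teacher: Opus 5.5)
Your proposal is correct. It starts where the paper's proof starts, with the Rayleigh-quotient characterization of the principal eigenvector and the normalized inner product defining $\cos\omega_{(m,n)}$, but from there it takes a stronger route.

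The paper's proof only records $0\le|\cos\omega_{(m,n)}|\le 1$ via Cauchy--Schwarz. It then interprets the two extreme cases: $\omega_{(m,n)}\approx 0^\circ$ as redundancy and $\omega_{(m,n)}\approx 90^\circ$ as complementary information. It ends with a remark on $\mathcal{S}_{opt}^k$ that lies outside the statement.

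You instead do three things. You prove the variational characterization from the spectral decomposition. You make the angle well defined by passing to the principal angle between the two dominant lines, which needs the spectral gap. Finally, you prove $\lambda_1^{(m)}\cos^2\omega_{(m,n)}\le u^TC_mu\le\lambda_1^{(m)}\cos^2\omega_{(m,n)}+\lambda_2^{(m)}\sin^2\omega_{(m,n)}$ for the unit dominant direction $u$ of sensor $n$. Writing $u=\cos\omega\,v_1+\sin\omega\,w$ with $w\perp v_1$ kills the cross term, and positive semidefiniteness plus Courant--Fischer give the two sides.

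The paper's version is short and matches the informal wording. Yours gives meaning to intermediate angles, which is what the word ``quantifies'' actually calls for. It also uses Assumptions 2 and 3 (common feature space, separated top eigenvalue) at exactly the points where they are needed; the paper states them but never invokes them.

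Your reading of $\omega$ modulo $\theta\mapsto\pi-\theta$ is also how the paper uses the angle in practice. The similarity score in Table~\ref{tab:cmi} is $|\cos\omega|$ (e.g.\ $119.81^\circ$ gives $0.496$). Do state explicitly that $\mathbf{x}_n^*$ is normalized, since Eqn.~\ref{eigen_eqn} as printed has no constraint $\|\mathbf{x}_m\|=1$.

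One correction to your stability remark: $\sin\theta\le\|E\|/(\lambda_1-\lambda_2)$ is false as stated. Take $C_m=\mathrm{diag}(1,0)$ and $E=\mathrm{diag}(-0.6,0.6)$. The top eigenvector rotates by $90^\circ$, so $\sin\theta=1$, while $\|E\|/(\lambda_1-\lambda_2)=0.6$. Use the Yu--Wang--Samworth form $\sin\theta\le 2\|E\|/(\lambda_1-\lambda_2)$, or classical Davis--Kahan with the gap taken between the perturbed top eigenvalue and $\lambda_2$. This does not affect your main argument.
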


	\textbf{Assumptions:}
	
	The interpretation of the principal eigenvector angle relies on the following assumptions:
	\begin{enumerate}
		\item The sensor signals are weakly stationary over the observation window, so that empirical covariance estimates are meaningful.
		\item Sensor data are represented in a common feature space with compatible scaling and normalization.
		\item The dominant eigenvalue of each covariance matrix is well separated from the remaining spectrum, ensuring that the principal eigenvector captures a stable variance direction.
		\item The covariance structure is representative of the sensing process over the time horizon of interest.
	\end{enumerate}

	\begin{proof}
		
		The principal eigenvector $v_1^m$ maximizes the Rayleigh quotient:
		\begin{equation}
			v_1^m = \operatorname*{argmax}_{\|v\|=1} v^T C_m v
		\end{equation}
		
		The angle between principal components is:
		\begin{equation}
			\cos(\omega_{(m,n)}) = \frac{v_1^m \cdot v_1^n}{\|v_1^m\|\|v_1^n\|}
		\end{equation}
		
		By the Cauchy-Schwarz inequality:
		\begin{equation}
			0 \leq |\cos(\omega_{(m,n)})| \leq 1
		\end{equation}
		
		For $\omega_{(m,n)} \approx 0^\circ$:
		\begin{equation}
			v_1^m \approx v_1^n \Rightarrow \text{high redundancy}
		\end{equation}
		
		For $\omega_{(m,n)} \approx 90^\circ$:
		\begin{equation}
			v_1^m \perp v_1^n \Rightarrow \text{complementary information}
		\end{equation}
		
		The optimal set $\mathcal{S}_{opt}^k$ minimizes:
		\begin{equation}
			\frac{1}{|\tilde{\mathcal{S}}|-1}\sum\omega_{(m,n)}
		\end{equation}
		which maximizes linear independence in the selected set.

	\end{proof}
	
	\begin{corollary}
		The average principal angle is inversely proportional to information density:
		\begin{equation}
			ID_{angle} \propto \frac{1}{\text{mean}(\omega_{(m,n)})}
		\end{equation}
	\end{corollary}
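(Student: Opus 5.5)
The plan is to make the corollary precise by fixing an operational meaning of $ID_{angle}$ that matches the problem definition in Section~\ref{prob_def}, and then derive the inverse relation from the Theorem. The Theorem tells us that $\omega_{(m,n)}$ measures how far apart the dominant variance directions $v_1^m$ and $v_1^n$ are. On its own this only gives a qualitative monotone statement. A genuine proportionality needs a quantity playing the role of ``information density''. I would therefore define $ID_{angle}$ as the reciprocal of the average inference error incurred when a sensor is replaced by a virtual one built from the dominant direction of another. This is in the spirit of the divergence constraint \eqref{eq:divergence_constraint}: an environment is denser when substitution loses less.

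\textbf{Step 1: error of substitution.} Under Assumptions 1--3 (weak stationarity, common normalized feature space, well-separated top eigenvalue), the data of sensor $n$ is concentrated along $v_1^n$. Projecting it onto $v_1^m$ retains a $\cos^2\omega_{(m,n)}$ fraction of the dominant variance $\lambda_1^n$. The residual dominant energy is therefore
\begin{equation*}
e_{(m,n)} \approx \lambda_1^n \sin^2\omega_{(m,n)} ,
\end{equation*}
up to lower-order terms controlled by the spectral gap of Assumption~3. Because the data are normalized (Assumption~2), I would take $\lambda_1^n$ comparable across sensors and absorb it into a constant. The error is then measured as an amplitude, $\sqrt{e_{(m,n)}}\propto \sin\omega_{(m,n)}$, which is the natural scale in which the deviation constraint on inference outputs is expressed.

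\textbf{Step 2: aggregate and invert.} Averaging over the pairs in $\Omega$, exactly as in \eqref{phase_min_eqn}, gives
\begin{equation*}
ID_{angle} := \frac{1}{\mathrm{mean}\,\sin\omega_{(m,n)}} .
\end{equation*}
For the regime of interest, where sensors are candidates for virtualization, the angles are small. There $\sin\omega = \omega\,(1+O(\omega^2))$, so $ID_{angle} = c/\mathrm{mean}(\omega_{(m,n)})\,(1+O(\bar\omega^2))$, which is the claimed proportionality.

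\textbf{Main obstacle.} The hard part is that the relation is really with $\mathrm{mean}\,\sin\omega$, not with $\mathrm{mean}\,\omega$, and $\sin$ is only linear near $0$. I would first handle this with the two-sided bound $\tfrac{2}{\pi}\omega\le\sin\omega\le\omega$ on $[0,\pi/2]$. This shows $ID_{angle}$ lies within a factor $\pi/2$ of $1/\mathrm{mean}(\omega)$ uniformly over all angles, and equals it asymptotically as the angles shrink. The proportionality in the corollary should therefore be read as exact to first order in the small-angle regime and as a bounded-ratio (order-of-magnitude) inverse relation in general.
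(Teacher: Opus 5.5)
\paragraph{Review.} The paper gives no derivation of this corollary. It is read off the Theorem's qualitative conclusion: larger $\omega_{(m,n)}$ means less similarity, hence less information density. Since $ID_{angle}$ is never defined independently, the displayed relation in effect serves as its definition. Grounding it in a substitution error, as you do, is a genuinely different route, but there is a genuine gap: Step~1 does not establish what it claims, for two reasons.

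First, projecting sensor $n$'s data onto $v_1^m$ still uses $n$'s own coordinate along $v_1^m$. That is a shared-basis compression error, not the error of a virtual sensor driven by $m$'s readings. The latter is governed by the cross-covariance $C_{nm}$ between the two streams, e.g.\ the linear-prediction error $\mathrm{tr}(C_n-C_{nm}C_m^{-1}C_{mn})$. The individual principal eigenvectors do not see $C_{nm}$: two independent sensors with identical covariance have $\omega_{(m,n)}=0$ yet are not substitutable at all. The paper's Theorem makes the same leap, but your proof is where it would have to be justified.

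Second, even the projection error is not $\lambda_1^n\sin^2\omega$ up to lower-order terms. Exactly, $\mathrm{tr}\,C_n-(v_1^m)^TC_nv_1^m=\lambda_1^n\sin^2\omega+\sum_{k\ge2}\lambda_k^n\bigl(1-(v_1^m\cdot v_k^n)^2\bigr)$, which tends to $\sum_{k\ge2}\lambda_k^n$ as $\omega\to0$. Assumption~3 controls $\lambda_2^n/\lambda_1^n$, not this tail sum, which can be comparable to or larger than $\lambda_1^n$. (Normalization does not equalize $\lambda_1^n$ across sensors either, so it cannot simply be absorbed into a constant.) So in exactly the small-angle regime where you claim first-order exactness, the neglected terms dominate, and your $ID_{angle}$ saturates instead of growing like $1/\mathrm{mean}(\omega)$. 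If you keep only the dominant-direction mismatch, Step~1 becomes exact but purely definitional: you have simply set $ID_{angle}:=1/\mathrm{mean}\sin\omega$.

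The uniform bound also fails for the angle the corollary is actually about. Eq.~\eqref{angle_comp} returns $\omega_{(m,n)}\in[0,\pi]$, and Table~\ref{tab:cmi} reports angles up to $119.81^\circ$. On $(\pi/2,\pi]$ the inequality $\tfrac{2}{\pi}\omega\le\sin\omega$ is false, and $\omega/\sin\omega\to\infty$ as $\omega\to\pi$. Already for wind speed, $\omega/\sin\omega\approx 2.4>\pi/2$. Moreover, $1/\sin\omega$ ranks wind speed above traffic ($78.51^\circ$), in line with the similarity score and the mutual information, whereas $1/\omega$ ranks it below. Since eigenvectors are defined only up to sign, the natural repair is to fold the angle, $\omega=\cos^{-1}|\,\cdot\,|\in[0,\pi/2]$, as the $|\cos|$ in the Theorem's proof and the score $\tau_{m,n}$ implicitly do. You must state this, because it changes the quantity in the corollary.

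With that repair, and with $e$ restricted to the dominant-direction mismatch, your argument is correct, but it proves something weaker than the display:
\begin{itemize}
\item exact proportionality holds only to first order, with relative error $O(\max\omega_{(m,n)}^2)$ rather than $O(\bar\omega^2)$;
\item otherwise you get only a ratio in $[1,\pi/2]$.
\end{itemize}
It also rests on a chosen definition, exactly as the paper's statement does. You should present it that way, not as a consequence of inference error.
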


	\textbf{\textit{Mutual Information:}} Information Density can also be defined by considering Shannon's Entropy for measuring the information between two random variables \cite{lin1991divergence}. This can be obtained by considering two random variables $\mathbf{x_m}$ and $\mathbf{x_n}$ and we associate them with two sensors $m, n$ from the set $\tilde{\mathcal{S}}$. Next, we compute the mutual information between $\mathbf{x_m}$ and $\mathbf{x_n}$ given by Eqn. \ref{mi__eqn}, where $ p(\mathbf{x_m}, \mathbf{x_n})$ refers to the joint probability distribution of $\mathbf{x_m}$ and $\mathbf{x_n}$; and $p(\mathbf{x_m})\,p(\mathbf{x_n})$ represent the respective marginal probability distributions.

	\begin{equation}
		I(\mathbf{x}_m; \mathbf{x}_n) = \sum_{\mathbf{x}_m} \sum_{\mathbf{x}_n} p(\mathbf{x}_m, \mathbf{x}_n) \log \left( \frac{p(\mathbf{x}_m, \mathbf{x}_n)}{p(\mathbf{x}_m)\,p(\mathbf{x}_n)} \right)
		\label{mi__eqn}
	\end{equation}

	Next, we compute the mutual information for all the sensors in the set $\tilde{\mathcal{S}}$. This gives us the set $\Gamma = \{\gamma_{(1,2)}, \gamma_{(1,3)}, \ldots, \gamma_{(N,N-1)}\}$, where $\gamma_{(m,n)}$ represent the mutual information between sensors $m, n$ computed using Eqn. \ref{mi__eqn}. That is, 
	
	\begin{equation}
		\gamma_{(m,n)} \equiv I(\mathbf{x}_m; \mathbf{x}_n)
	\end{equation}

	Similar to the space $\Omega$, this mutual information space $\Gamma$ also physically represent the 2D scalar field of the Information Density for a given environment \( \mathcal{E} \) and a specific phenomenon \( \Phi \), but inversely. The higher the values of \(\gamma_{(m,n)}\), the more is the similarity among the sensor readings, and higher is the information density.
	
	Thus, in order to find the optimal set of sensing modalities using the space  $\Gamma$, the sensor ranks are assigned based on the mutual information values. The sensors with the highest mutual information are deemed to exhibit the most representative or directional patterns. Determining the set $\mathcal{S}_{\text{opt}}^k$ with top $k$ sensors is given by Eqn. \ref{mi_max_eqn}.

	\begin{equation}
		\mathcal{S}_{\text{opt}}^k= argsort(\frac{1}{|\tilde{\mathcal{S}}|-1} \sum_{n\in \tilde{\mathcal{S}}, n \neq m}{\gamma_{(m,n)}})_{[k:]}
		%
		\label{mi_max_eqn}
	\end{equation}
	
	The following theorem can be stated with respect to this measure.

	\begin{theorem}
		The mutual information $I(X_m;X_n)$ measures the non-linear statistical dependence between sensors.
	\end{theorem}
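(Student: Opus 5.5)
To make the statement precise, I will read ``measures the non-linear statistical dependence'' as three concrete claims. First, $I(X_m;X_n)\ge 0$, with equality if and only if $X_m$ and $X_n$ are independent. Second, $I(X_m;X_n)$ is invariant under arbitrary invertible (possibly non-linear) reparametrisations of each sensor signal. Third, it can be strictly positive when the linear correlation is zero, so it detects dependence that a covariance-based quantity such as $\omega_{(m,n)}$ from the previous theorem cannot. The proof will rest entirely on the definition in Eqn.~\ref{mi__eqn} and standard facts about the Kullback--Leibler divergence.

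\emph{Step 1 (nonnegativity and the independence characterisation).} I rewrite Eqn.~\ref{mi__eqn} as
\begin{equation*}
I(X_m;X_n)=D_{\mathrm{KL}}\bigl(p(x_m,x_n)\,\|\,p(x_m)p(x_n)\bigr).
\end{equation*}
Jensen's inequality applied to the concave function $\log$ gives Gibbs' inequality, hence $I\ge 0$. Equality in Jensen holds only when the ratio $p(x_m,x_n)/(p(x_m)p(x_n))$ is constant on the support. Since both distributions sum to one, that constant must be $1$, so the joint equals the product of the marginals, which is exactly independence. As a consequence, any dependence at all, linear or not, yields $I>0$.

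\emph{Step 2 (invariance under non-linear maps).} Let $g,h$ be bijections and set $Y_m=g(X_m)$, $Y_n=h(X_n)$. In the discrete sum of Eqn.~\ref{mi__eqn}, both the joint and the marginal probabilities are simply relabelled, so every term is preserved and $I(Y_m;Y_n)=I(X_m;X_n)$. I would remark that the continuous case follows in the same way, because the Jacobian factors cancel between numerator and denominator inside the logarithm. The point of this step is that the measure depends only on the dependence structure and not on the (possibly non-linear) calibration of each sensor, which contrasts with the covariance eigenvectors underlying $\omega_{(m,n)}$.

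\emph{Step 3 (strict separation from linear measures).} I give an explicit example. Take $X_m$ uniform on $\{-1,0,1\}$ and $X_n=X_m^2$. Then $\mathrm{Cov}(X_m,X_n)=\mathbb{E}[X_m^3]=0$, yet $X_n$ is a deterministic function of $X_m$. This gives $I(X_m;X_n)=H(X_n)=H(1/3)>0$. So mutual information captures dependence that is invisible to second-order statistics.

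\emph{Step 4 (the Gaussian case).} To close the argument, I would add the jointly Gaussian case, where $I=-\tfrac12\log(1-\rho^2)$, so that mutual information reduces to a monotone function of linear correlation. The main obstacle is making ``measures'' rigorous rather than interpretive. I would handle this by stating the theorem as the conjunction of Steps 1--3, with Step 4 showing that mutual information is consistent with the linear case where the two notions coincide.
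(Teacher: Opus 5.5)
Your Step 1 is essentially the paper's entire proof, and the rest of your proposal is correct and goes beyond it. The paper writes $I(X_m;X_n)$ as $D_{\mathrm{KL}}\bigl(p(x_m,x_n)\,\|\,p(x_m)p(x_n)\bigr)$. It gets $I=0\Rightarrow X_m\perp X_n$ by citing that KL divergence vanishes only between identical distributions, and gets the converse by substituting the factorised joint to obtain $\mathbb{E}[\log 1]=0$. It reads the theorem as exactly this independence characterisation and stops there.

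You go further in two ways. First, you prove the KL fact via Gibbs' inequality and so also get $I\ge 0$. The paper never states this, but it is what upgrades ``$I\neq 0$ under dependence'' to ``$I>0$ under dependence''. Second, Steps 2--4 supply what actually justifies the word \emph{non-linear}, which the paper never addresses. Its only implicit support is that independence is strictly stronger than uncorrelatedness. Your $X_n=X_m^2$ example exhibits dependence with zero covariance. Step 2 shows insensitivity to non-linear recalibration of each sensor. Step 4 shows that $I$ is monotone in $|\rho|$ in the jointly Gaussian case. The paper's route buys brevity; yours buys a proof whose content matches the wording of the statement.

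Two small repairs are needed. In Step 1, the product of marginals can put mass where the joint vanishes. So ``both sum to one'' only gives $k\ge 1$ for the constant ratio $k$ of joint to product on the support of the joint. Conclude $k=1$ from $D_{\mathrm{KL}}=\log k=0$ instead. In Step 3, your example separates $I$ from the cross-covariance of $X_m$ and $X_n$, not from $\omega_{(m,n)}$. That angle is built from each sensor's own covariance matrix $C_m$, not from any cross-covariance, and is degenerate for scalar signals. Phrase that contrast in terms of linear correlation only.
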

	
	\begin{proof}
		
		Forward Direction: \\
		If $I(X_m; X_n) = 0$, then by the properties of Kullback-Leibler divergence:
		\begin{equation*}
			D_{\text{KL}}(p(x_m, x_n) \parallel p(x_m)p(x_n)) = 0
		\end{equation*}
		\begin{equation}
			\implies p(x_m, x_n) = p(x_m)p(x_n) 
		\end{equation}
		
		Thus, $X_m \perp X_n$.

		Reverse Direction: \\
		
		If $X_m \perp X_n$, then:
		\begin{equation}
			I(X_m; X_n) = \mathbb{E} \left[ \log \frac{p(x_m)p(x_n)}{p(x_m)p(x_n)} \right] = \mathbb{E}[\log 1] = 0
		\end{equation}
		
	\end{proof}
	
	\begin{corollary}
		The average mutual information is directly proportional to information density:
		\begin{equation}
			ID_{MI} \propto \text{mean}(I(X_m;X_n))
		\end{equation}
	\end{corollary}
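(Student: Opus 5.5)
The plan is to treat the corollary as a consequence of how the space $\Gamma$ was introduced, combined with the preceding theorem on mutual information. The text before Eqn. \ref{mi_max_eqn} already states that $\Gamma$ "physically represents the 2D scalar field of the Information Density", with higher $\gamma_{(m,n)}$ meaning higher density. The first step is therefore to make this precise. I would define the information density of a configuration $\tilde{\mathcal{S}}$ as an aggregate of the pairwise field,
\begin{equation*}
ID_{MI}(\tilde{\mathcal{S}}) = \kappa \sum_{m<n} w_{(m,n)}\,\gamma_{(m,n)},
\end{equation*}
with uniform weights $w_{(m,n)}$ and a constant $\kappa>0$. The constant depends only on the environment $\mathcal{E}$ and the phenomenon $\Phi$, for instance through a normalisation by $\log$ of the alphabet size so that the quantity is unit-free. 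With uniform weights this sum equals $\kappa\binom{N}{2}\,\text{mean}(\gamma_{(m,n)})$, which is exactly the claimed proportionality for a fixed number of sensors $N$.

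The second step is to justify that this linear aggregate is the right notion of density rather than an arbitrary choice. For this I would use the preceding theorem, together with the standard non-negativity $I(X_m;X_n)\ge 0$ (Gibbs' inequality applied to the KL divergence). These give three facts:
\begin{itemize}
\item each $\gamma_{(m,n)}\ge 0$;
\item $\text{mean}(\gamma)=0$ if and only if every pair of sensors is independent, which is the case of no redundancy and hence zero information density, where no sensor can be virtually inferred from another;
\item the mean is strictly increasing in each pairwise dependence.
\end{itemize}
I would also invoke $\gamma_{(m,n)}\le\min\{H(X_m),H(X_n)\}$. This shows the mean is bounded, and shows that dividing by the entropy scale is what fixes $\kappa$. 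It also makes the direction of proportionality consistent with the ranking rule in Eqn. \ref{mi_max_eqn}, which retains the sensors with the largest average $\gamma$.

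The main obstacle is that "proportional" cannot be derived from monotonicity alone. Any increasing function of the mean would satisfy the qualitative properties above. I would handle this by adding one requirement: information density should be additive over disjoint pairs of sensors, just as $\Gamma$ is a pairwise field summed over the environment. Additivity, together with continuity and the zero-at-independence property, forces the aggregating map to be linear in the $\gamma_{(m,n)}$, by the Cauchy functional equation argument. Symmetry under relabelling sensors then forces the weights to be equal, which gives proportionality to the mean. If the paper intends only an ordinal statement, I would instead state the corollary as monotone equivalence, meaning the two quantities induce the same ranking of configurations, and remark that this is all Eqn. \ref{mi_max_eqn} uses.
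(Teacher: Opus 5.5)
Your first step is essentially all the paper does. The paper gives no derivation for this corollary. It restates the earlier sentence identifying $\Gamma$ with the information-density field (higher $\gamma_{(m,n)}$, higher density), and the preceding theorem contributes only $I(X_m;X_n)=0 \iff X_m\perp X_n$. Since $ID_{MI}$ is never defined independently, proportionality can only hold by definition, and your uniform-weight aggregate gives it directly for fixed $N$.

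The gap is in the step you present as overcoming the ``main obstacle.'' Additivity over disjoint pairs, symmetry under relabelling, continuity and vanishing at independence force at most the separable form $ID_{MI}=\sum_{m<n}\phi(\gamma_{(m,n)})$, with one common $\phi$ satisfying $\phi(0)=0$. They do not force $\phi$ to be linear: both $\phi(x)=x^2$ and $\phi(x)=1-e^{-x}$ satisfy every property you list, including vanishing exactly under pairwise independence and strict monotonicity in each $\gamma_{(m,n)}$. Cauchy's equation needs $\phi(x+y)=\phi(x)+\phi(y)$, which is additivity in the \emph{value} of a single pair's mutual information. Your axiom only gives additivity over the \emph{index set} of pairs.

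To close this you would need a further postulate, for example invariance under grouping sensors. If $(X_1,X_2)$ is independent of $(X_3,X_4)$, then $I\bigl((X_1,X_3);(X_2,X_4)\bigr)=\gamma_{(1,2)}+\gamma_{(3,4)}$. Demanding that this composite two-sensor configuration carry the same density as the original four-sensor one yields Cauchy's equation. Even then the result is the \emph{sum} of the $\gamma_{(m,n)}$, which is proportional to the mean only for fixed $N$. Across configurations of different sizes, additivity over pairs and proportionality to the mean are incompatible.

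The same examples break your ordinal fallback. With the concave $\phi(x)=1-e^{-x}$, three-sensor configurations with pairwise profiles $(1,1,1)$ and $(3.1,0,0)$ are ordered opposite to their means. So ``inducing the same ranking as the mean'' is also an assumption, not a consequence of the qualitative properties. Your remark that this is all Eqn.~\ref{mi_max_eqn} uses also conflates two objects: that equation ranks individual sensors by the per-sensor average $\frac{1}{|\tilde{\mathcal{S}}|-1}\sum_{n\neq m}\gamma_{(m,n)}$, not configurations by the mean over all pairs.

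Two smaller points:
\begin{enumerate}
\item Normalising by entropies such as $\min\{H(X_m),H(X_n)\}$ is pair-dependent and would destroy the uniform weights. Only a global constant, such as the log of a common alphabet size, is compatible with proportionality.
\item A zero mean does not mean no sensor can be virtually inferred. For independent fair bits $X_1,X_2$ and $X_3=X_1\oplus X_2$, every pair is independent, yet $X_3$ is determined by $\{X_1,X_2\}$. That is exactly the multi-sensor inference the paper performs.
\end{enumerate}

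The defensible position is that the corollary is a modelling definition, which your first step already supplies. The axiomatic upgrade should either be dropped or rebuilt on a grouping-type axiom.
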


	\textbf{\textit{Comparison of Information Density Measures}}: Although both principal eigenvector alignment and mutual information quantify aspects of Information Density, they capture complementary properties of sensor data and are not interchangeable (Table \ref{id_comp}).
	
	Principal eigenvector alignment primarily reflects linear structure and dominant variance directions in sensor signals. It is computationally efficient, robust under moderate noise, and well suited for large-scale deployments where frequent recomputation is required. However, it may fail to capture non-linear dependencies or higher-order interactions between sensors.
	
	Mutual information, by contrast, captures both linear and non-linear statistical dependence and provides a model-agnostic measure of shared information. This makes it particularly valuable for cross-modality inference. Its limitations include higher computational cost and sensitivity to sample size, which can lead to unreliable estimates in data-scarce regimes.
	
	In practice, eigen-space measures are preferable for fast, scalable sensor ranking in dense networks, while mutual information is more appropriate when cross-modal semantic relationships are of interest and sufficient data is available. The experimental results in Section \ref{use_case} demonstrate that, in data-rich urban sensing environments, both measures produce consistent rankings, thereby reinforcing the robustness of the proposed framework.

	\begin{table}[]\centering
		\caption{Comparison of Information Density Measures}
		\label{id_comp}
		\renewcommand{\arraystretch}{2}
		\begin{tabular}{|c|c|c|}
			\hline
			\textbf{Aspect}      & \textbf{Eigen-space Phase}       & \textbf{Mutual Information}\\ \hline 
			Dependency Type           & Linear           & Linear + Non Linear            \\ \hline 
			Data Requirement 	& Moderate 	& High           \\ \hline 
			Computational Cost           & Low    	& High      \\ \hline 
			Failure Mode              & Non-linear Coupling		& Small sample bias             \\ \hline 
		\end{tabular}
	\end{table}

	%

	%
	%
	%
	%
	%

	\section{Operational Meaning and Decision Role of Information Density}
	\label{op_meaning}
	
	Information Density is introduced in this work as a decision-support metric rather than a decision-theoretic optimum. Its purpose is not to directly solve an optimal sensor placement or replacement problem under all conditions, but to provide a quantitative and computationally tractable proxy for identifying redundancy, substitutability, and representational sufficiency among sensing resources.
	
	Concretely, Information Density characterizes how much of the dominant structure of a sensing field, captured through variance alignment or statistical dependence, can be preserved when a subset of physical sensors or modalities is retained. High Information Density between sensors or modalities indicates that one can be inferred from others with bounded error, enabling informed decisions about sensor removal, replacement by virtual sensors, or consolidation of sensing modalities.
	
	The guarantees provided by Information Density are therefore conditional. Given (i) sufficient historical data to estimate correlations, (ii) approximate stationarity or slow temporal drift in sensing relationships, and (iii) an expressive inference model, high Information Density implies that virtual sensing can achieve low reconstruction or inference error for the corresponding phenomenon. Conversely, low Information Density signals that physical sensing resources provide complementary information that cannot be reliably substituted.
	
	From a deployment perspective, Information Density enables the following trade-offs to be made explicit: (a) reducing physical sensor count versus tolerable inference error, (b) computation and communication overhead versus sensing redundancy, and (c) robustness to sensor failures versus deployment cost. Rather than prescribing a single optimal configuration, the metric supports what-if analysis by ranking sensors or modalities according to their informational contribution under a specified error tolerance. A practitioner workflow diagram is presented in Fig. \ref{fig:workflow}.
	
	Importantly, Information Density does not claim optimality in adversarial, highly non-stationary, or data-scarce environments. In such cases, correlation estimates may be unreliable and virtual sensing performance may degrade. These limits are explicitly discussed in Section \ref{use_case}. Within its intended scope, however, Information Density provides a principled and interpretable mechanism to guide sensor deployment and replacement decisions in large-scale, data-rich sensing infrastructures.

	\textbf{\textit{Relationship Between Information Density and Sensing Performance}}: The measures of Information Density proposed in this work, principal eigenvector alignment and mutual information, do not directly optimize sensing or inference performance. Instead, they characterize structural properties of sensor data that are known to constrain achievable inference accuracy.
	
	In the intra-modality setting, alignment of dominant eigenvectors reflects similarity in the principal variance directions of sensor signals. When such alignment is high, a function approximator trained on a subset of sensors can capture most of the variance necessary to reconstruct the remaining signals, thereby reducing estimation error. Conversely, orthogonal dominant components indicate complementary structure that cannot be inferred without additional physical measurements.
	
	In the cross-modality setting, mutual information provides an upper bound on the predictability of one modality from another, independent of the specific inference model used. While high mutual information does not guarantee low inference error for a particular model, it indicates the existence of a functional relationship that can, in principle, be exploited by sufficiently expressive learners.
	
	These relationships hold most strongly under approximate stationarity and sufficiently rich training data. In non-stationary or highly non-linear environments, Information Density should therefore be interpreted as an empirical indicator rather than a theoretical guarantee of optimal sensing performance.

	\textbf{\textit{Limitations and Failure Modes}}: To be noted that the proposed framework has several important operational scopes and limits. First, Information Density relies on historical data to estimate statistical relationships; rapid non-stationarity or concept drift can invalidate these estimates and degrade virtual sensing performance. Second, mutual information estimates become unreliable under sparse data conditions, limiting applicability in newly deployed or intermittently sampled networks. Third, weak or spurious correlations across modalities reduce the effectiveness of cross-modality inference.
	
	These limits delineate the operational scope of the proposed approach and highlight directions for future work, including adaptive recomputation strategies and online drift detection.

	\textbf{\textit{Computational and Operational Considerations}}: Information Density computation is performed offline or periodically at edge or cloud nodes, rather than on resource-constrained sensors. Eigen-space-based measures scale linearly with sensor count and feature dimension, while mutual information incurs higher computational cost due to pairwise estimation.
	
	In typical urban deployments, Information Density can be recomputed on the order of hours or days, amortizing its cost over long-term sensor operation. This overhead is offset by reduced communication, maintenance, and energy costs resulting from fewer active physical sensors.

	\begin{table}[]\centering
		\caption{Deep Learning Model Baseline Hyperparameters}
		\label{model_param}
		\renewcommand{\arraystretch}{2}
		\begin{tabular}{|c|c|}
			\hline
			\textbf{Parameter}      & \textbf{Value} \\ \hline 
			Training Loss           & MSE            \\ \hline 
			Hidden Layer Activation & ReLU           \\ \hline 
			Learning Rate           & 0.001          \\ \hline 
			Batch Size              & 64             \\ \hline 
			Train-Validation Split        & 80:20          \\ \hline
			Stopping Criteria       & Patience of 500 epochs\\ \hline
		\end{tabular}
	\end{table}

	\section{Use Case: Smart City Sensor Network in Madrid}
	\label{use_case}
	
	\begin{figure}[h]
		\centering
		\includegraphics[height=5.0 cm]{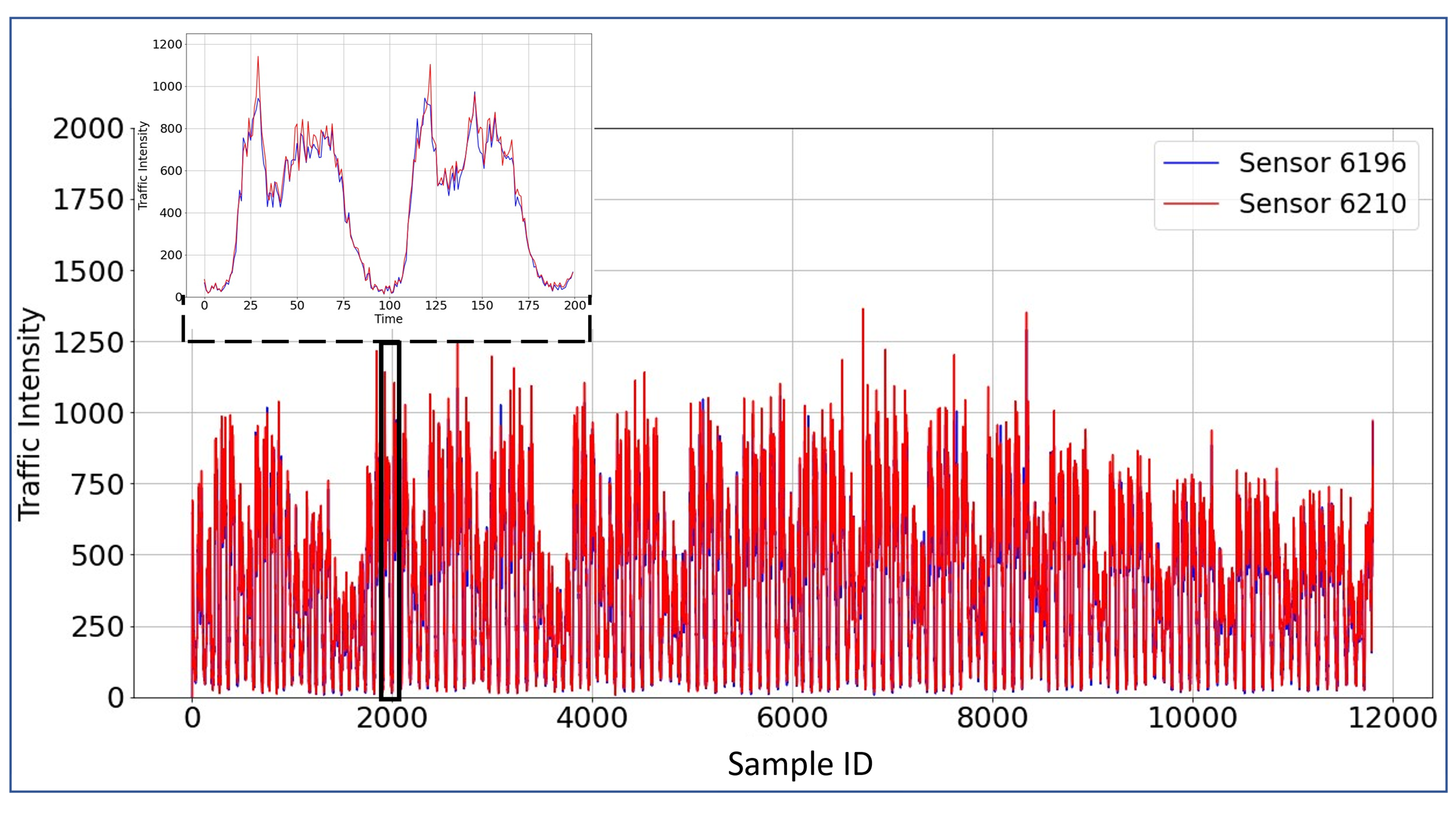}
		\caption{Time-series comparison of traffic readings from two sensors (IDs: 6196 and 6210) with high mutual information and low phase angle. The visual overlap confirms strong redundancy, suggesting one sensor could be virtually inferred from the other.}
		\label{fig:sensor_time_series}
	\end{figure}

	\begin{table*}[!ht]
		\centering
		\footnotesize
		\setlength{\tabcolsep}{4pt}	
		\caption{Performance metrics for virtual sensors under different physical sensor configurations (Scenarios 1–3). Each scenario represents a different number of physical sensors ($|\mathcal{S}_{opt}|$ = 1, 3, 5) used for training, selected based on high information density. Virtual sensors are inferred using a neural network model, and accuracy is assessed using Mean Absolute Error (MAE), Normalized MAE (NMAE), and R² (minimum and maximum values are shown in bold). Results demonstrate that even a single physical sensor can yield low average estimation error ($<3.21\%$).}
		\begin{adjustbox}{width=\textwidth}
			\begin{tabular}{@{}lcccccccccccc@{}}
				\toprule
				\multirow{2}{*}{Virtual Sensors ID} & \multicolumn{4}{c}{Scenario 1} & \multicolumn{4}{c}{Scenario 2} & \multicolumn{4}{c}{Scenario 3} \\
				\cmidrule(lr){2-5} \cmidrule(lr){6-9} \cmidrule(lr){10-13}
				& Physical Sensors ID & MAE & R2 & NMAE & Physical Sensors ID & MAE & R2 & NMAE & Physical Sensors ID & MAE & R2 & NMAE \\
				\midrule
				3695 & 6506 & 20.2262 & 0.8711 & 0.0175 & 6196 & 18.3625 & 0.8900 & 0.0158 & 6196 & 18.0590 & 0.8947 & 0.0162 \\
				3696 & - & 18.6045 & 0.7650 & 0.0161 & 6207 & 17.5504 & 0.7885 & 0.0151 & 6207 & 17.0597 & 0.7953 & 0.0153 \\
				3782 & - & 31.8930 & 0.8197 & 0.0276 & 6506 & 28.9177 & 0.8425 & 0.0249 & 6210 & 28.1554 & 0.8534 & 0.0252 \\
				3783 & - & \textbf{67.5077} & 0.8785 & \textbf{0.0584} & - & \textbf{59.2489} & 0.8989 & \textbf{0.0509} & 6214 & \textbf{57.4459} & 0.9052 & \textbf{0.0515} \\
				6194 & - & 74.1403 & 0.8836 & 0.0641 & - & 49.8684 & 0.9414 & 0.0429 & 6506 & 48.1629 & 0.9467 & 0.0432 \\
				6195 & - & 31.6807 & 0.8059 & 0.0274 & - & 28.4296 & 0.8287 & 0.0244 & - & 27.4203 & 0.8464 & 0.0246 \\
				6196 & - & 57.0584 & 0.8983 & 0.0494 & - & - & - & - & - & - & - & - \\
				6197 & - & 48.3352 & 0.8868 & 0.0418 & - & 39.7293 & 0.9181 & 0.0341 & - & 38.9244 & 0.9227 & 0.0349 \\
				6198 & - & 33.7377 & 0.8507 & 0.0292 & - & 30.1440 & 0.8800 & 0.0259 & - & 29.4285 & 0.8831 & 0.0264 \\
				6199 & - & 25.5660 & 0.8199 & 0.0221 & - & 23.2747 & 0.8441 & 0.0200 & - & 23.2498 & 0.8464 & 0.0208 \\
				6200 & - & 23.7761 & 0.8735 & 0.0206 & - & 21.3544 & 0.8954 & 0.0184 & - & 20.9408 & 0.8997 & 0.0188 \\
				6201 & - & 24.0228 & 0.6796 & 0.0208 & - & 23.0388 & 0.6958 & 0.0198 & - & 23.3100 & 0.6949 & 0.0209 \\
				6202 & - & 21.4668 & 0.8283 & 0.0186 & - & 19.5548 & 0.8498 & 0.0168 & - & 19.6617 & 0.8504 & 0.0176 \\
				6203 & - & 30.9802 & 0.8187 & 0.0268 & - & 28.2692 & 0.8379 & 0.0243 & - & 27.6217 & 0.8483 & 0.0248 \\
				6204 & - & 22.4119 & 0.8490 & 0.0194 & - & 20.4074 & 0.8773 & 0.0175 & - & 19.8447 & 0.8828 & 0.0178 \\
				6205 & - & 26.4881 & 0.8537 & 0.0229 & - & 23.2267 & 0.8849 & 0.0200 & - & 22.8904 & 0.8893 & 0.0205 \\
				6206 & - & 45.5323 & 0.8961 & 0.0394 & - & 20.8375 & \textbf{0.9773} & 0.0179 & - & 22.2160 & \textbf{0.9760} & 0.0199 \\
				6207 & - & 41.5325 & 0.9036 & 0.0359 & - & - & - & - & - & - & - & - \\
				6208 & - & 45.2677 & 0.8977 & 0.0392 & - & 36.8211 & 0.9279 & 0.0316 & - & 32.4175 & 0.9453 & 0.0291 \\
				6209 & - & 45.2657 & 0.8882 & 0.0392 & - & 37.5464 & 0.9158 & 0.0323 & - & 37.6357 & 0.9173 & 0.0337 \\
				6210 & - & 59.3736 & 0.8988 & 0.0514 & - & 22.5204 & 0.9825 & 0.0194 & - & - & - & - \\
				6211 & - & 16.4869 & 0.7549 & 0.0143 & - & 16.0631 & 0.7610 & 0.0138 & - & 15.6807 & 0.7707 & 0.0141 \\
				6212 & - & 39.6106 & 0.8185 & 0.0343 & - & 36.1989 & 0.8355 & 0.0311 & - & 35.3288 & 0.8486 & 0.0317 \\
				6213 & - & 13.3989 & 0.7550 & 0.0116 & - & 12.3854 & 0.7797 & 0.0106 & - & 12.0313 & 0.7896 & 0.0108 \\
				6214 & - & 47.5855 & \textbf{0.9166} & 0.0412 & - & 38.4128 & 0.9436 & 0.0330 & - & - & - & - \\
				6215 & - & 55.6608 & 0.8790 & 0.0482 & - & 39.5701 & 0.9405 & 0.0340 & - & 38.3888 & 0.9473 & 0.0344 \\
				6216 & - & 62.6021 & 0.7667 & 0.0542 & - & 58.7572 & 0.7781 & 0.0505 & - & 53.1117 & 0.8193 & 0.0476 \\
				6217 & - & 50.0275 & 0.8364 & 0.0433 & - & 42.6310 & 0.8779 & 0.0366 & - & 40.0634 & 0.8958 & 0.0359 \\
				6218 & - & 59.4434 & 0.7114 & 0.0514 & - & 53.0078 & 0.7560 & 0.0456 & - & 49.3269 & 0.8060 & 0.0442 \\
				6219 & - & \textbf{7.5969} & \textbf{0.6166} &\textbf{ 0.0066} & - & \textbf{7.3193} & \textbf{0.6462} & \textbf{0.0063} & - & \textbf{7.0307} & \textbf{0.7082} & \textbf{0.0063} \\
				6506 & - & - & - & - & - & - & - & -& - & - & - & - \\
				6922 & - & 32.2178 & 0.8558 & 0.0279 & - & 29.0078 & 0.8748 & 0.0249 & - & 28.4190 & 0.8809 & 0.0255 \\
				6923 & - & 15.7615 & 0.8568 & 0.0136 & - & 14.4797 & 0.8764 & 0.0124 & - & 14.3585 & 0.8783 & 0.0129 \\
				6930 & - & 24.5990 & 0.8097 & 0.0213 & - & 22.6410 & 0.8298 & 0.0195 & - & 22.1709 & 0.8367 & 0.0199 \\
				10400 & - & 60.5447 & 0.7848 & 0.0524 & - & 53.7111 & 0.8229 & 0.0462 & - & 48.2224 & 0.8700 & 0.0432 \\
				10943 & - & 19.2615 & 0.7210 & 0.0167 & - & 18.4379 & 0.7361 & 0.0158 & - & 18.2501 & 0.7375 & 0.0164 \\
				\midrule
				Average Performance & - & 37.1333 & 0.8271 & 0.0321 & - & 30.0523 & 0.8526 & 0.0258 & - & 28.9299 & 0.8576 & 0.0259 \\
				\bottomrule
			\end{tabular}
		\end{adjustbox}
		\label{tab:sensor_performance_part1}
	\end{table*}

	\subsection{Scenario Description}

	To validate the measures in Section~\ref{measures}, we examine a real-world smart city use case in Madrid \footnote{\url{https://datos.madrid.es/portal/site/egob/}}. The city operates a large-scale IoT network that collects spatiotemporal data on traffic, air quality, noise, weather, and energy use. Traffic is monitored using inductive loop detectors, fixed/PTZ cameras, Doppler radar, and LiDAR to compute metrics such as traffic density, flow rate, and average speed, critical for intelligent transportation systems (ITS) \cite{lana2016role}. Environmental data is gathered via meteorological stations equipped with instruments like thermometers, anemometers, and hygrometers. Madrid’s 24 air quality stations monitor pollutants (e.g., $PM_{2.5}, PM_{10}$, $NO_X$, $CO_2, SO_2, O_3$) using specialized analyzers such as chemiluminescence and TEOM devices \cite{nunez2019statistical}. Noise pollution is tracked with precision microphones and frequency analyzers to identify urban acoustic sources \cite{asensio2020changes}.

	From a connectivity standpoint, roadside sensors and IoT-enabled devices relay data wirelessly to centralized hubs. Low-power wide-area network (LPWAN) technologies, such as LoRaWAN and NB-IoT, are commonly utilized for long-range, low-bandwidth data transmission. Meanwhile, real-time communication is often facilitated through Wi-Fi and cellular networks (e.g., GPRS, 4G, 5G), enabling seamless data integration with centralized analytics platforms.	In scenarios where terrestrial infrastructure is limited or absent, satellite links are employed to backhaul environmental data to either local processing units or cloud-based systems. Data aggregation and analysis are handled through integrated IoT platforms like FIWARE and proprietary smart city control centers, with cloud-based infrastructures supporting real-time processing and open APIs for access.
	
	The proposed framework does not explicitly incorporate geometric distance or spatial constraints into the sensor selection process. Instead, spatial effects are implicitly reflected through the statistical structure of the collected data. Sensors deployed in close proximity or subject to similar environmental influences tend to exhibit stronger correlations and higher Information Density, whereas sensors that are spatially distant or exposed to heterogeneous conditions typically contribute complementary information. The experiments in this paper are conducted on real-world sensing datasets with fixed spatial deployments. Consequently, the reported results evaluate the effectiveness of Information Density in guiding sensor selection and virtual sensing under an existing deployment, rather than optimizing spatial placement.

	\textbf{\textit{Data Collection, Processing, and Reproducibility Considerations}}: As mentioned earlier, the experimental evaluation in this paper is conducted using an open urban sensing dataset released by the City of Madrid through its public open data portal. The dataset is collected and maintained by the city’s municipal sensing infrastructure and made available as periodically updated time-series records. Our study uses a static snapshot of the dataset corresponding to a fixed observation window, rather than live data streams.
	
	Data acquisition and preprocessing were performed offline using standard data analytics tools. Raw sensor readings were temporally aligned, normalized, and filtered to remove incomplete records prior to analysis. No proprietary software or specialized hardware is assumed; all processing steps can be reproduced using commonly available data science frameworks.
	
	The dataset follows the FAIR data principles. It is findable and accessible through a persistent public repository provided by the City of Madrid, interoperable through standardized data formats and metadata descriptions, and reusable under an open license that permits research use. These properties support reproducibility of the reported results and enable independent validation of the proposed framework.
	
	We emphasize that the proposed Information Density framework is agnostic to the specific data collection infrastructure and can be applied to other open or proprietary sensing datasets that provide sufficient historical data for statistical analysis.

	\begin{figure*}[h]
		\centering
		\includegraphics[width=\textwidth]{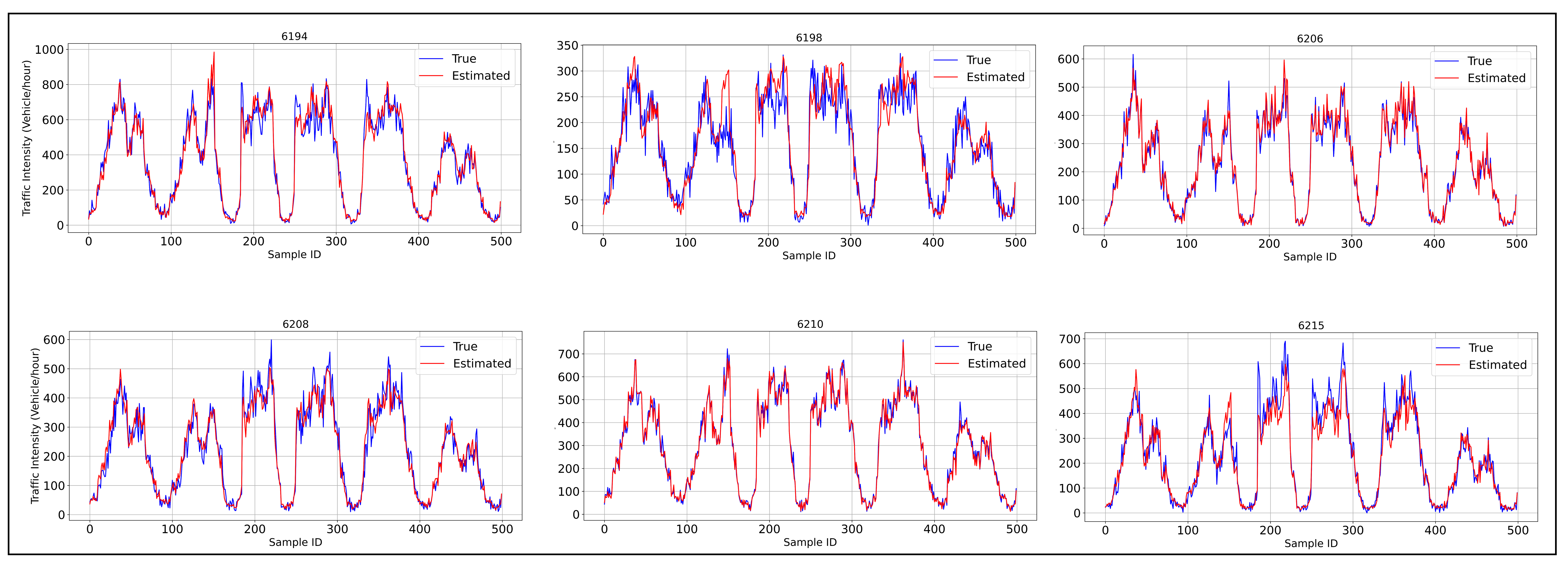}
		\caption{Predicted vs. actual sensor readings in the Intra-modality Virtual Sensing (ImVS) experiment. The plots demonstrate that the deep learning model accurately reconstructs virtual sensor signals using a limited number of physical sensors with high information density.}
		\label{fig:spatial_dmf_time_series}
	\end{figure*}

	\begin{table*}[!ht]
		\centering
		\scriptsize
		\renewcommand{\arraystretch}{0.7}
		\setlength{\tabcolsep}{3.5pt}
		
		\caption{Performance metrics for virtual sensors under extended configurations (Scenarios 4–5).}
		\begin{adjustbox}{width=14 cm}
			\begin{tabular}{@{}lcccccccc@{}}
				\toprule
				\multirow{2}{*}{Virtual Sensors ID} & \multicolumn{4}{c}{Scenario 4 (8 sensors)} & \multicolumn{4}{c}{Scenario 5 (10 sensors)} \\
				\cmidrule(lr){2-5} \cmidrule(lr){6-9}
				& Physical Sensors ID & MAE & R2 & NMAE & Physical Sensors ID & MAE & R2 & NMAE \\
				\midrule
				3695 & 6196 & 18.2850 & 0.8922 & 0.0162 & 6196 & 18.0270 & 0.8948 & 0.0158 \\
				3696 & 6197 & 16.4432 & 0.8178 & 0.0146 & 6197 & 16.3065 & 0.8239 & 0.0143 \\
				3782 & 6206 & 25.2405 & 0.8841 & 0.0224 & 6206 & 23.2097 & 0.9024 & 0.0203 \\
				3783 & 6207 & 35.6950 & 0.9655 & 0.0317 & 6207 & - & - & - \\
				6194 & 6208 & 41.9510 & 0.9645 & 0.0372 & 6208 & \textbf{41.2892} & \textbf{0.9662} & \textbf{0.0362} \\
				6195 & 6210 & 24.1813 & 0.8856 & 0.0214 & 6210 & 23.4502 & 0.8925 & 0.0206 \\
				6196 & 6214 & - & - & - & 6214 & - & - & - \\
				6197 & 6506 & - & - & - & 6506 & - & - & - \\
				6198 & - & 29.0999 & 0.8860 & 0.0258 & 6209 & 29.3407 & 0.8860 & 0.0257 \\
				6199 & - & 22.6370 & 0.8526 & 0.0201 & 3783 & 22.3167 & 0.8561 & 0.0196 \\
				6200 & - & 20.8853 & 0.8991 & 0.0185 & - & 21.0945 & 0.8987 & 0.0185 \\
				6201 & - & 21.0577 & 0.7324 & 0.0187 & - & 21.0122 & 0.7347 & 0.0184 \\
				6202 & - & 18.1953 & 0.8719 & 0.0161 & - & 18.1633 & 0.8742 & 0.0159 \\
				6203 & - & 24.1445 & 0.8846 & 0.0214 & - & 23.6170 & 0.8900 & 0.0207 \\
				6204 & - & 19.5266 & 0.8874 & 0.0173 & - & 19.6756 & 0.8866 & 0.0172 \\
				6205 & - & 22.7487 & 0.8893 & 0.0202 & - & 22.8186 & 0.8906 & 0.0200 \\
				6206 & - & - & - & - & - & - & - & - \\
				6207 & - & - & - & - & - & - & - & - \\
				6208 & - & - & - & - & - & - & - & - \\
				6209 & - & 19.5365 & \textbf{0.9772} & 0.0173 & - & - & - & - \\
				6210 & - & - & - & - & - & - & - & - \\
				6211 & - & 15.6184 & 0.7721 & 0.0139 & - & 16.2318 & 0.7577 & 0.0142 \\
				6212 & - & 28.7409 & 0.9025 & 0.0255 & - & 28.0762 & 0.9089 & 0.0246 \\
				6213 & - & 12.3998 & 0.7909 & 0.0110 & - & 11.8537 & 0.8045 & 0.0104 \\
				6214 & - & - & - & - & - & - & - & - \\
				6215 & - & 36.1353 & 0.9524 & 0.0320 & - & 35.0045 & 0.9558 & 0.0307 \\
				6216 & - & \textbf{42.8128} & 0.8897 & \textbf{0.0380} & - & 40.3368 & 0.9052 & 0.0354 \\
				6217 & - & 35.5553 & 0.9247 & 0.0315 & - & 32.9761 & 0.9352 & 0.0289 \\
				6218 & - & 38.8889 & 0.8937 & 0.0345 & - & 36.5927 & 0.9092 & 0.0321 \\
				6219 & - & \textbf{7.1913} & \textbf{0.7041} & \textbf{0.0064} & - & \textbf{6.9706} & \textbf{0.7176} & \textbf{0.0061} \\
				6506 & - & - & - & - & - & - & - & - \\
				6922 & - & 24.8756 & 0.9088 & 0.0221 & - & 24.2361 & 0.9145 & 0.0212 \\
				6923 & - & 14.2668 & 0.8764 & 0.0127 & - & 14.3508 & 0.8763 & 0.0126 \\
				6930 & - & 21.8101 & 0.8399 & 0.0193 & - & 21.6041 & 0.8429 & 0.0189 \\
				10400 & - & 41.2831 & 0.9135 & 0.0366 & - & 37.6375 & 0.9306 & 0.0330 \\
				10943 & - & 17.6647 & 0.7438 & 0.0157 & - & 16.9470 & 0.7768 & 0.0149 \\
				\midrule
				Average Performance & - & 24.8882 & 0.8715 & 0.0221 & - & 23.9669 & 0.8705 & 0.0210 \\
				\bottomrule
			\end{tabular}
		\end{adjustbox}
		\label{tab:sensor_performance_part2}
	\end{table*}

	\begin{figure*}[!ht]
		\centering
		\includegraphics[width=\textwidth]{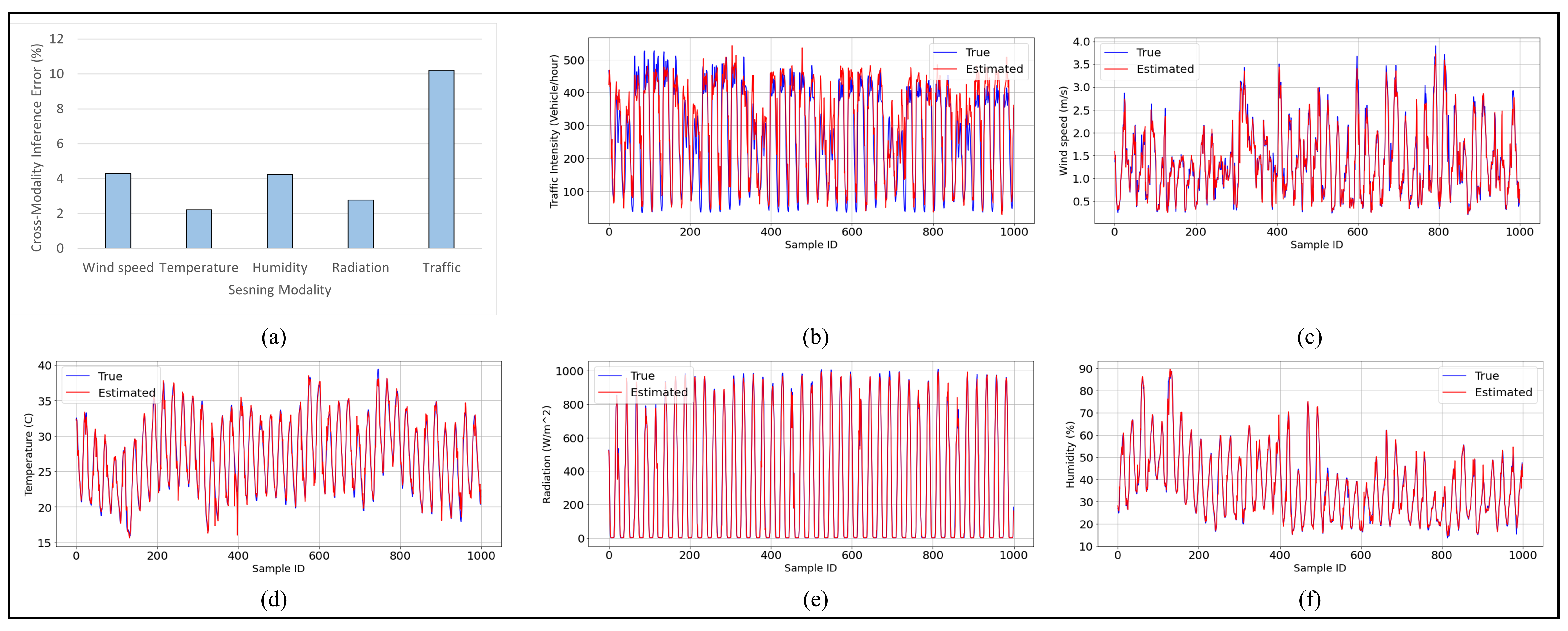}
		\caption{Estimated signal performance in cross-modality inference: (a) Normalized Mean Absolute Error (NMAE) for estimating various environmental modalities (e.g., temperature, humidity, radiation) from pollutant concentration data using cross-modality inference. Lower values indicate accurate virtual sensing, and (b)-(d) Examples of actual vs. estimated signals for different environmental modalities using cross-modality virtual sensing. The close match in trends and amplitudes illustrates the model’s capability to infer multiple phenomena from pollutant data alone.}
		\label{fig:cmi_time_series}
	\end{figure*}

	\subsection{Intra-modality Virtual Sensing}
	\label{imvr_res}
	
	In this study, we first extend the concept of ``Data-driven Modality Fusion (DMF)" to a more generalized setting of Virtual Sensing that includes both Intra-modality Virtual Sensing (ImVS) and Cross-modality Inference (CmI), as explained earlier in section \ref{measures}. We first demonstrate the concept of Intra-modality Virtual Sensing in this setting and analyze the measures of Information Density in this context. For this analysis, we first consider the traffic sensor network in the district 19, with 45 sensing units, of Madrid collecting traffic intensity readings at an interval of 15 minutes for the entire year of 2023.
	
	%

	The 2-D scalar fields representing information density in this region are depicted in Figs. \ref{fig:phase_2d} and \ref{fig:mi_2d} that use the definitions of Eigen space phase angles and mutual information, defined in Eqn. \ref{angle_comp} and \ref{mi__eqn} respectively. The prima-facie observation from this is that both these figures complement each other by representing the structural similarity in the sensor readings for that specific area, in the form of information density. Although these two metrics provide the similar representation regarding the sensor measurements, however, they are inverse in characteristics. In other words, information density in the Eigen space indicate divergence measures, while that in the mutual information space demonstrate a similarity metric. Another noteworthy observation from these figures is the high information density existing between sensors 6196 and 6210, represented by their very low phase angle (blue in \ref{fig:phase_2d}) and very high mutual information (red in \ref{fig:mi_2d}) values. In order to validate this, we plotted the readings from these two sensors in the same frame, as shown in Fig. \ref{fig:sensor_time_series}. The plot also demonstrates the strong similarity in the sensor readings that exists between them. From an application or implementation standpoint, this holds an important significance, stating redundancy in deploying both these sensors. In other words, given such strong information density between these two, one can be replaced with the others, with a given tolerable margin of error.

	\begin{figure}[h]
		\centering
		\includegraphics[height=5.0 cm]{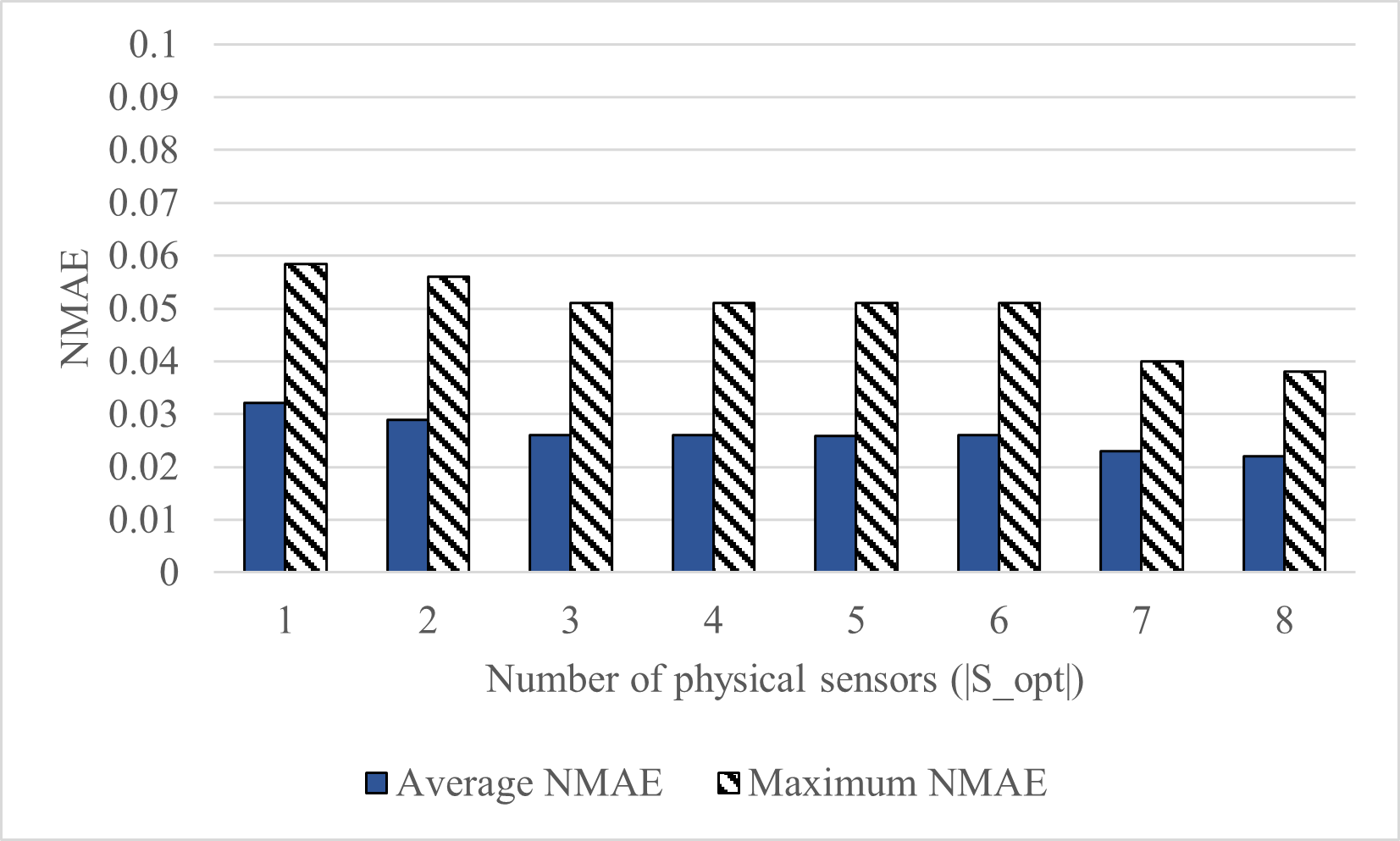}
		\caption{Estimation errors for virtual sensing for varying number of physical sensors}
		\label{fig:s_opt}
	\end{figure}

	Having said that, the next step is to demonstrate how to leverage this information density scalar field to implement Intra-modality Virtual Sensing (ImVS). As explained earlier, this paradigm of ImVS is an extension of Data-driven Modality Fusion for a generalized spatial sensing context. This can be accomplished by first choosing $|\mathcal{S}_{opt}|$ sensors with high information density measures (Eqns. \ref{phase_min_eqn} and \ref{mi_max_eqn}) and then modeling their interdependence using a function approximation framework. In this setup, we have used a deep learning architecture for modeling this inter-dependency among these sensors with high information density. The details about the deep learning model used and the related hyperparameters are provided in Table \ref{model_param}.

	The deep learning model uses feed-forward Neural Network (NN) with 7 hidden layers. The structure of the NN used in this set of experiments is given by $|\mathcal{S}_{opt}|\times 32\times 64\times 128\times 256\times 512\times 256\times 64\times (|\tilde{\mathcal{S}}|-|\mathcal{S}_{opt}|)$. The models are trained by minimizing the  Mean Square Error (MSE) losses defined in Eqns. \ref{mae_itr}. The MSE loss for a model estimating the readings of sensor $s_m$ (for a training batch of size $N$), that needs to be minimized, for this ITR approach is defined by Eqn. \ref{mae_itr}.
	
	\begin{equation}
		\mathcal{L}(s_m; \theta_k) = \sum_{i=1}^{N}{(f_m(\mathbf{x}_i; \theta_k)-y_{s_m})^2};  1\leq m \leq |\mathcal{S}_{opt}|
		\label{mae_itr}
	\end{equation}
	Here, $\theta_k$ is the set of model parameters at instance $k$ and $f_m(\mathbf{x}; \theta_k)$ is the function approximated for the $m^{th}$ sensor for the input predictor vector $\mathbf{x}$.
	
	The loss functions are minimized using the Adaptive Moment Estimation Algorithm that computes individual adaptive learning rates for different parameters from estimates of first and second moments of the gradients \cite{kingma2014adam}. The model parameter update rule at instance $k$ is given by Eqn. \ref{adam}, where $g_k, v_k$ and $m_k$ represent the loss gradient, first and second order moment estimates respectively.
	
	\begin{subequations}
		\label{adam}
		\begin{align}
			g_k = \nabla \mathcal{L}(\theta_{k-1})\\
			m_k = \beta_1 m_{k-1}+(1-\beta_1)g_k\\
			v_k = \beta_2 v_{k-1}+(1-\beta_2)(g_k)^2\\
			\theta_{k}=\theta_{k-1}-\alpha_k \times \frac{m_k}{(1-\beta_1^k)\times (\sqrt{\frac{v_k}{1-\beta_2^k}}+\epsilon)}
		\end{align}
	\end{subequations}

	We perform experiments with different size of the set of physical sensors ($\mathcal{S}_{opt}$). Tables \ref{tab:sensor_performance_part1} and \ref{tab:sensor_performance_part2} summarize performance under five scenarios where the number of physical sensors used ($|\mathcal{S}_{opt}|$) varies from 1 to 10. Each scenario tests the model’s ability to estimate readings of virtual sensors across different configurations, progressively increasing the size of the physical sensor set. As shown in the tables, we consider five different cases with $|\mathcal{S}_{opt}|=1, 3, 5, 8, 10$ number of physical sensors and estimate the readings of the other sensors in that area. Performance is evaluated using Mean Absolute Error (MAE), its normalized value (NMAE) and R2 score. The noteworthy observation here is that even using only single physical sensor (that is, $\mathcal{S}_{opt}=1$, chosen following Eqns. \ref{phase_min_eqn} and \ref{mi_max_eqn}), the mean error is $3.21 \%$ and the error is bounded by  $6.41 \%$, demonstrating the effectiveness and feasibility of the approach. Moreover, with the increase in the size of the set  $\mathcal{S}_{opt}$, this error reduces, although it saturates beyond a point. Some comparative samples of the estimated and true signals from the sensors are presented in Fig. \ref{fig:spatial_dmf_time_series}, which show the ability of Intra-modality Virtual Sensing to estimate the measurements of the virtual sensors with high accuracy. As seen in the figure, most virtual sensor estimates closely match the true values, demonstrating the high fidelity of the proposed virtual sensing approach. While some discrepancies exist, particularly during peak fluctuations, the overall signal trend and amplitude are preserved, validating the model’s effectiveness in high-density information regions. This is also demonstrated in the estimation error variation plot for varying number of physical sensors, as shown in Fig. \ref{fig:s_opt}.

	\begin{figure}[h]
		\centering
		\includegraphics[height=6.0 cm]{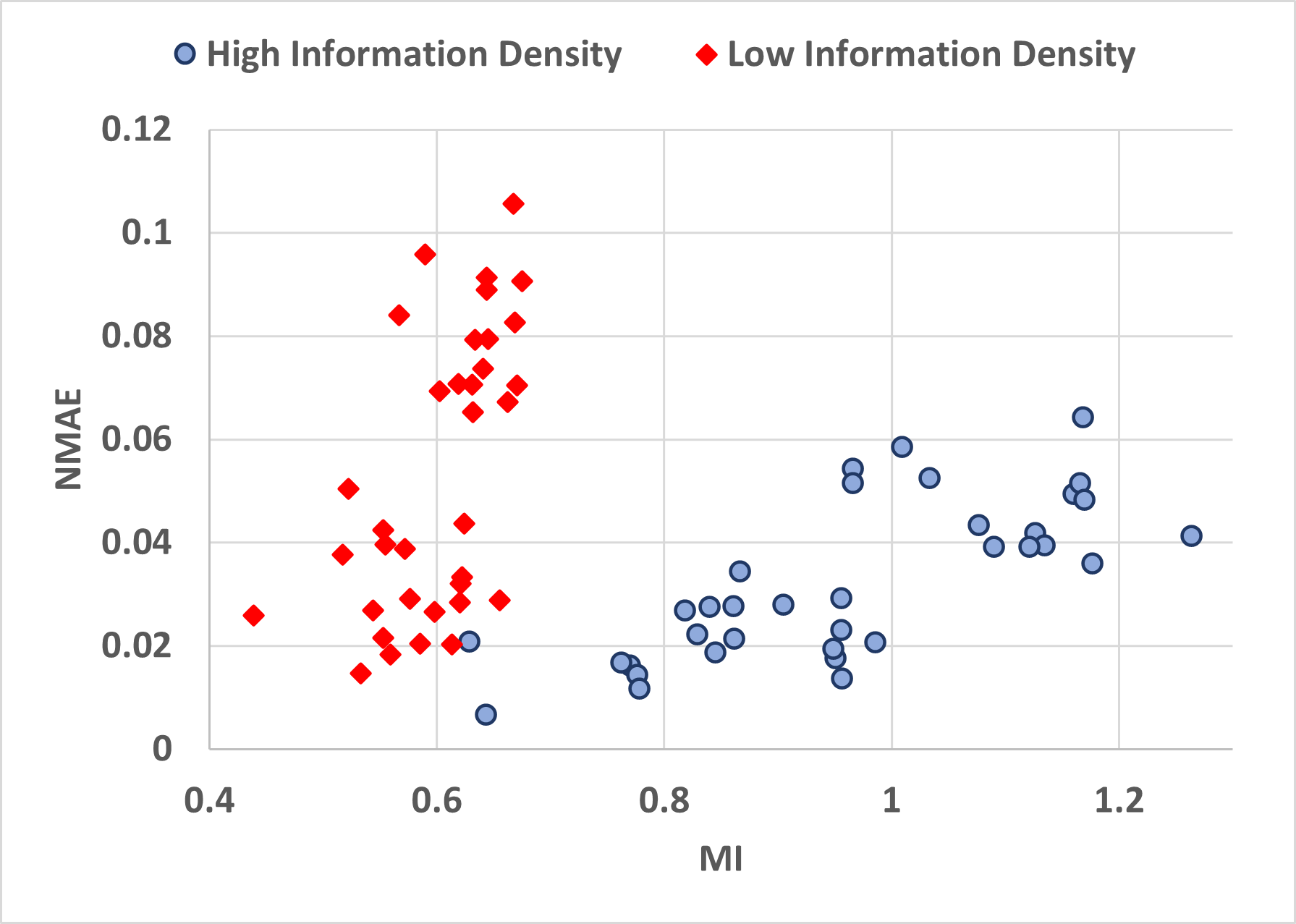}
		\caption{Variation of estimation error across different sensor selections in ImVS. Sensors selected based on high information density yield significantly lower prediction errors, confirming the metric’s utility in sensor selection and system performance.}
		\label{fig:clusters}
	\end{figure}

	Fig. \ref{fig:clusters} shows the relationship between information density and the performance of Intra-modality Virtual Sensing (ImVS) for two scenarios: first, choosing the physical sensors based on high information density following the scalar field of Fig. \ref{fig:comparison_2d} and second, choosing sensors with low information density. It is clearly evident from the figure that selecting physical sensors in ImVS with high information density outperforms the reverse scenario, emphasizing the need for Information Density measures in this context. The figure also provides another insight in regards to the high correlation existing between information density and model performance, that is, selecting high information density sensors for training typically leads to low estimation error.

	\begin{figure}[h]
		\centering
		\includegraphics[height=5.0 cm]{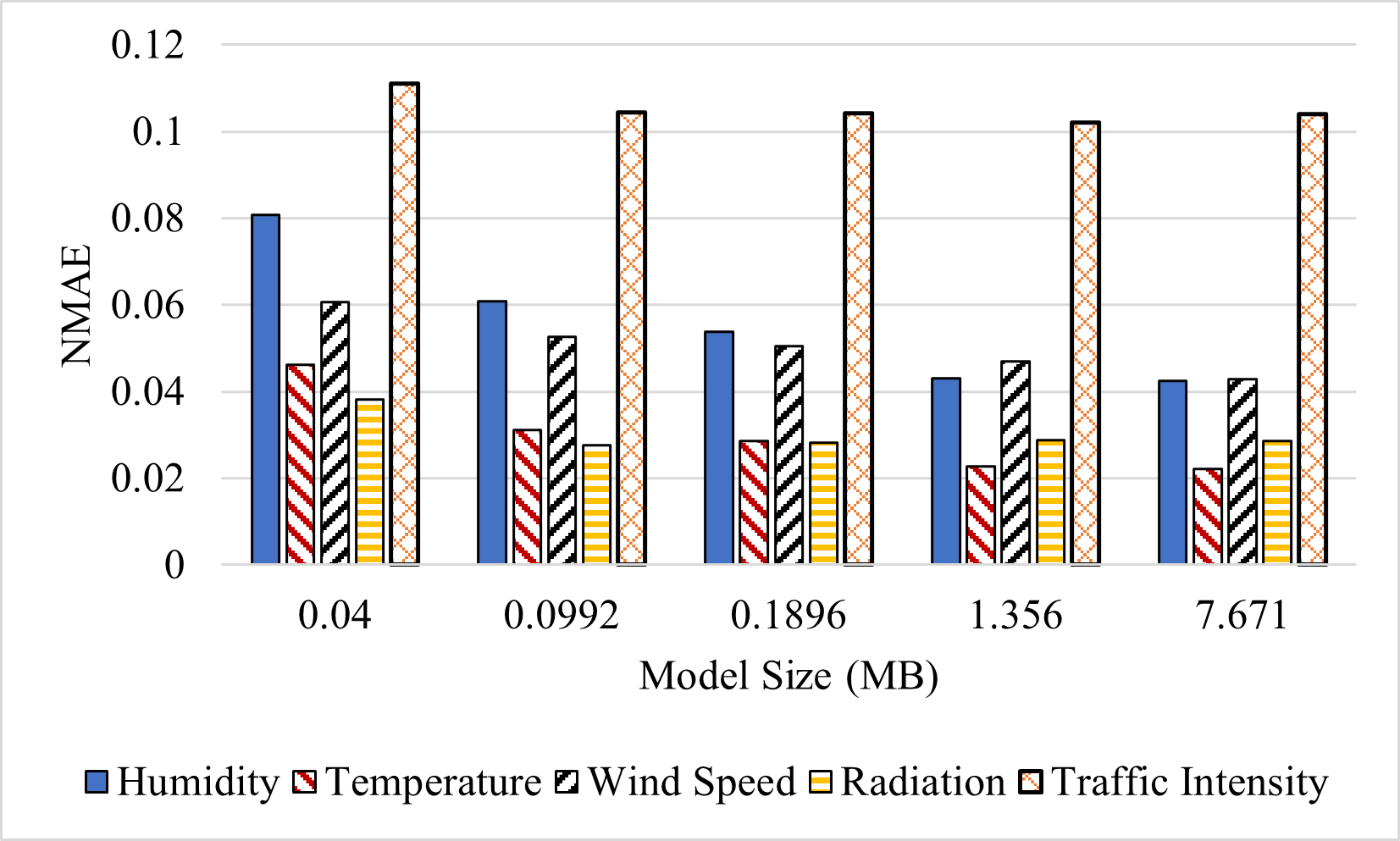}
		\caption{Virtual Sensing performance variation with model architecture}
		\label{fig:model_size}
	\end{figure}

	\begin{table}[h!]
		\centering
		\caption{Information Density measures between pollutant concentration data and other modalities for cross-modality inference. Mutual Information and Phase in Eigen Space (converted to cosine-based Similarity Score) are computed between pollutant data and five other modalities: wind speed, temperature, humidity, radiation, and traffic. A random signal is also included as a baseline. Higher similarity scores and mutual information indicate greater potential for accurately inferring one modality from another, validating the selection of pollutants as input features for virtual sensing}
		\begin{tabular}{l c c c}
			\toprule
			\textbf{Parameter} & \makecell{\textbf{Mutual} \\ \textbf{Information}} & \makecell{\textbf{Phase in} \\ \textbf{Eigen Space (degree)}} & \makecell{\textbf{Similarity} \\ \textbf{Score}} \\
			\midrule
			Wind speed  & 0.223083698 & 119.81  & 0.496205 \\
			Temperature & 0.147406851 & 108.22  & 0.311757 \\
			Humidity    & 0.098116551 & 102.60  & 0.217257 \\
			Radiation   & 0.100822997 & 102.71  & 0.219130 \\
			Traffic     & 0.104425675 & 78.51   & 0.199878 \\
			Random      & 0.001387143 & 89.625  & 0.007338 \\
			\bottomrule
		\end{tabular}
		\label{tab:cmi}
	\end{table}

	\begin{table*}[!ht]
		\centering
		\caption{Qualitative comparison of sensor selection methods across accuracy-relevant and system-level criteria.}
		\label{tab:baseline_comparison}
		\renewcommand{\arraystretch}{1.2}
		\setlength{\tabcolsep}{6pt}
		\begin{tabular}{lcccc}
			\hline
			\textbf{Criterion} 
			& \textbf{Random} 
			& \textbf{Correlation-based \cite{ding2024deep}} 
			& \textbf{Variance-based  \cite{wang2024dynamic}} 
			& \textbf{Information Density (Proposed)} \\
			\hline
			Captures dominant variance structure 
			& \texttimes 
			& \texttimes 
			& \checkmark 
			& \checkmark \\
			
			Accounts for redundancy among sensors 
			& \texttimes 
			& Partial 
			& Partial 
			& \checkmark \\
			
			Robust to noisy or spurious correlations 
			& \texttimes 
			& \texttimes 
			& \checkmark 
			& \checkmark \\
			
			Handles non-linear dependencies 
			& \texttimes 
			& \texttimes 
			& \texttimes 
			& \checkmark$^{\dagger}$ \\
			
			Applicable to cross-modality selection 
			& \texttimes 
			& \texttimes 
			& \texttimes 
			& \checkmark \\
			
			Computational cost 
			& Very low 
			& Low 
			& Moderate 
			& Moderate \\
			
			Data requirement 
			& Low 
			& Low 
			& Moderate 
			& Moderate \\
			\hline
		\end{tabular}
		\begin{flushleft}
			\footnotesize
			$^{\dagger}$When instantiated using mutual information; eigen-space measures capture linear structure.
		\end{flushleft}
	\end{table*}

	\subsection{Cross-modality Inference}
	
	To analyze the importance of information density measures on cross-modality inference, we consider all the sensing stations of Madrid. We take into account six different sensing modalities for this purpose: traffic intensity, pollutant concentrations ($SO_2, CO, NO,PM_{2.5},PM_{10},NO_X, O_3$, $C_6H_6$, $C_6H_5CH_3$, $C_8H_{10}$), wind speed, temperature, radiation and humidity. We consider the problem of estimating all other modalities from pollutant concentration measurements. Hence, we first compute the information density measures between the pollutant concentration readings and all other parameters, as shown in Table \ref{tab:cmi}. It is worth mentioning that we have adapted the information density measure based on Eigen vector angle $\omega_{m.n}$ between two modalities $m, n$ by defining a similarity score $\tau_{m,n}$, given by Eqn. \ref{cmi_eqn}.
	
	\begin{equation}
		\tau_{m,n}=cos(\omega_{m.n}); 0\leq\tau_{m,n}\leq 1
		\label{cmi_eqn}
	\end{equation}
	
	Note that in Table \ref{tab:cmi}, we have also computed the information density for a randomly generated sensor reading (following a uniform distribution) to strengthen our claim. It can be observed that the phase between the principal Eigen vectors of the random variable and the pollutant concentrations are orthogonal. In other words, there isn't much similarity in information between these two modalities, which is logical. As a result, the similarity score and the information density measure based on mutual information (Eqn. \ref{mi__eqn}) are very low for this scenario; which does not hold for the other modalities, indicating there is a possibility of estimating the rest of the modalities from the pollutant concentration readings.
	
	To validate the claims, we consider the same experimental setup used for ImVS discussed earlier in section \ref{imvr_res}. Here, the deep learning model used for function approximation is also a feed-forward Neural Network (NN) with 8 hidden layers. The structure of the NN used in this set of experiments to estimate the modalities is given by $|\mathcal{M}_{poll}|\times 20\times 50\times 100\times 300\times 300\times 100\times 50\times 20\times 1$. Here $\mathcal{M}_{poll}$ is the set of modalities used for pollutant concentrations. The training of the models follows the same procedure used in ImVS, that is, minimizing the  Mean Square Error (MSE) losses defined in Eqns. \ref{mae_itr}. Fig. \ref{fig:model_size} summarizes the Virtual Sensing performance across different sensing modalities as a function of model complexity. Overall, the estimation error decreases as the model size increases, up to a point beyond which the gains diminish. This trend arises because larger models offer greater degrees of freedom to approximate the nonlinear relationship between each sensing modality and the pollutant concentration. Note that in data-scarce settings, more complex models are prone to overfitting due to the curse of dimensionality, which could be mitigated by incorporating multiple years of training data for sensor reading estimation.

	Fig. \ref{fig:cmi_time_series} demonstrates the performance of the framework in terms of estimation error. defined by MAE. The bar plot in Fig. \ref{fig:cmi_time_series}a shows that the inference error in this set of experiments is bounded by 11\%. The time-series plots in Fig. \ref{fig:cmi_time_series}b-d demonstrate a comparison between the true measurements and the estimated signals using CmI. The plots show strong correlation between the true and estimated readings, thus justifying the significance of Information Density measures used for the choice of sensing modalities used.

	\subsection{Comparison with Baselines}
	
	In addition to random sensor selection, we evaluate two standard baselines: (i) variance-based selection \cite{wang2024dynamic}, where sensors are ranked by signal energy or PCA loading magnitude, and (ii) correlation-based selection \cite{ding2024deep}, where sensors are ranked according to their average absolute Pearson correlation with other sensors, favoring less correlated sensors. These baselines represent commonly used heuristics for sensor reduction in practice.
	
	While correlation-based selection captures pairwise linear relationships, it does not distinguish between dominant and minor variance components and is sensitive to noise and scaling. In contrast, Information Density operates in eigen-space, enabling it to identify redundancy in dominant variance structure and to generalize naturally to cross-modality scenarios. As a result, Information Density provides a more robust and principled basis for sensor selection, particularly in dense sensing environments. A detailed comparison is presented in Table \ref{tab:baseline_comparison}.
	
	\section{Conclusions}
	\label{conc}

	This paper presented Information Density as a novel quantitative measure to inform and structure sensor deployment and replacement decisions, while enabling AI-driven virtual sensing in large-scale IoT networks. By incorporating spatial, temporal and inter-modal correlations into the sensing model, we demonstrated that virtual sensors can effectively replace or supplement physical ones, significantly reducing redundancy and computational overhead. The proposed complementary measures, viz. Phase in Eigen Space and Mutual Information, provided robust criteria for selecting optimal sensor configurations, validated through real-world experiments in Madrid's smart city environment. Key findings include the ability to achieve low estimation errors (e.g., $<6.41 \%$ NMAE with minimal physical sensors and a mean NMAE of $3.21 \%$) and the strong correlation between information density and sensing performance. These insights pave the way for scalable, adaptive, and energy-efficient sensing infrastructures, with applications ranging from smart cities to industrial automation.

	\ifCLASSOPTIONcaptionsoff
	\newpage
	\fi

	\bibliographystyle{IEEEtran}
	\bibliography{bibtex/Reference}

	\begin{IEEEbiography}
		{Hrishikesh Dutta}
		is a Postdoctoral Research Scientist in the Data Intelligence and Communication Engineering Laboratory of Institut Polytechnique de Paris. He received his PhD from Michigan State University in 2024. His research interests include Internet-of-Things, Computer Networks, Machine Learning and Distributed Systems.
	\end{IEEEbiography}
	
	\begin{IEEEbiography}
		{Prof. Roberto Minerva}
		received the M.S. degree (summa cum laude) in computer science from the University of Bari Aldo Moro, Bari, Italy, in 1987, and the Ph.D. degree in computer science
		and telecommunications from Pierre and Marie Curie University–Sorbonne University, Paris, France, in 2013.	From 1987 to 1996, he was a Researcher in the area of service architectures and network intelligence with the Telecom Italia Research Center, Turin, Italy. In the following years, he was responsible for several research groups related to network intelligence and evolution to next generation networks. From 2013 to 2016, he was appointed to the Strategic Initiatives of TIM. Since 2016, he has been the Technical Project Leader of the SoftFIRE, a European Project devoted to the experimentation of network function virtualization (NFV), software defined networking (SDN), and edge computing. Since 2018, he has been an Associate Professor in softwarization with the Service Architecture Laboratory, Wireless Networks and Multimedia Services Department, Institut Mines Telecom, Telecom Sud Paris, Évry, France, a part of the Institute Polytechnique de Paris, Paris. He has authored the book Networks and New Services: A Complete Story. He is also	an author of more than 50 papers in journals and international conferences. Dr. Minerva is a member of the Scientific Committee of the Fondazione Bruno Kessler, Trento, Italy. He has been the Chairperson of the IEEE Internet of Things Initiative from 2014 to 2016			
	\end{IEEEbiography}

	\begin{IEEEbiography}
		{Reza Farahbakhsh}
		received the B.S. degree from Qazvin Azad University in 2006, the M.S. degree in computer engineering from the University of Isfahan, Iran, 2008, and the Ph.D. degree from the CNRS Laboratory UMR5157, Institut-Mines Telecom, Telecom SudParis, jointly with UPMC, Paris VI, in 2015. He is currently a Post-Doctoral Researcher with the Institut-Mines Telecom, Telecom SudParis. His research interests are online social networks, IoT, large scale measurement, and user behavior analysis.
	\end{IEEEbiography}
	
	\begin{IEEEbiography}
		{Prof. Noel Crespi}
		holds Masters degrees from the Universities of Paris-Saclay (formerly Orsay) and Kent (UK), a diplome d’ingénieur from Telecom Paris, and a Ph.D and an Habilitation from Sorbonne University. From 1993 he worked at CLIP, Bouygues Telecom and then at Orange Labs in 1995. He took leading roles in the creation of new services with the successful conception and launch of Orange prepaid service (15M+ subscribers), and in standardization (from the rapporteurship of IN standard to the coordination of all mobile standards activities for Orange). In 1999, he joined Nortel Networks as telephony program manager, architecting core network products for the EMEA region. He joined Institut Mines-Telecom, Telecom SudParis in 2002 and is currently Professor and Program Director at Institut Polytechnique de Paris, leading the Data Intelligence and Communication Engineering Lab. He coordinates the standardization activities for Institut Mines-Telecom at ITU-T and ETSI. He was an adjunct professor at KAIST (South Korea), a guest researcher at the University of Goettingen (Germany) and an affiliate professor at Concordia University (Canada). He is the scientific director of AI2, a French-Korean laboratory. As a Principal Investigator or Co-Investigator, he has secured research grants added up to 8M+ euros. His current research interests are in Edge Intelligence, IoT, Digital Twin, Artificial Intelligence and NLP.
		http://noelcrespi.wp.tem-tsp.eu/			
	\end{IEEEbiography}

	\vfill

\end{document}